\DeclareMathAlphabet{\mathitbf}{OML}{cmm}{b}{it}
\newtheorem{theorem}{Theorem}[section]
\newtheorem{definition}{Definition}
\newtheorem{lemma}[theorem]{Lemma}
\newtheorem{proposition}[theorem]{Proposition}
\newtheorem{example}{Example}
\renewcommand{\comment}[1]{}
\definecolor{gray230}{RGB}{240,240,240}
\newcommand{\prob}[2][]{\text{\bf Pr}\ifthenelse{\not\equal{}{#1}}{_{#1}}{}\!\left[#2\right]}
\newcommand{\expect}[2][]{\text{\bf E}\ifthenelse{\not\equal{}{#1}}{_{#1}}{}\!\left[#2\right]}
\newcommand{\agind}[1][i]{_{#1}}
\newcommand{\inversed}[1]{#1^{-1}}
\newcommand{\ironed}{\bar}
\newcommand{\differentiated}[1]{#1'}
\newcommand{\fortype}{\tilde}
\newcommand{\estimated}{\hat}
\newcommand{\sampled}{\hat}
\newcommand{\noaccents}[1]{#1}
\newcommand{\composed}[3]{#1{#2{#3}}}
\newcommand{\newindexedvar}[4][\noaccents]{%
\expandafter\newcommand\expandafter{\csname #2\endcsname}{#1{#4}}%
\expandafter\newcommand\expandafter{\csname #2s\endcsname}{#1{\boldsymbol{#4}}}%
\expandafter\newcommand\expandafter{\csname #2sm#3\endcsname}[1][#3]{#1{\boldsymbol{#4}}_{-##1}}%
\expandafter\newcommand\expandafter{\csname #2#3\endcsname}[1][#3]{#1{#4}\agind[##1]}%
\expandafter\newcommand\expandafter{\csname #2#3th\endcsname}[1][#3]{#1{#4}_{(##1)}}%
}
\composed{\differentiated}{\ironed}]{margiwal}{k}{\wal}
\composed{\differentiated}{\ironed}]{margiyal}{k}{\yal}
\composed{\differentiated}{\ironed}]{imumarg}{k}{\murev}
\begin{document}
\title{A Budget Feasible Peer Graded Mechanism For IoT-Based Crowdsourcing}
\author{Vikash Kumar Singh\thanks{Department of Computer Science \& Engineering, NIT Durgapur, Durgapur, West Bengal, India}~ \footnote{email: vikas.1688@gmail.com}
\and
Sajal Mukhopadhyay\footnotemark[1]
\and
Fatos Xhafa\thanks{Department of Computer Science, Universitat Polit$\grave{e}$cnica de Catalunya, Barcelona, Spain}
\and
Aniruddh Sharma\thanks{Engineer, Qualcomm India Private Ltd., Hyderabad, Andhra Pradesh, India}}

\maketitle

\begin{abstract}
We develop and extend a line of recent works on the design of mechanisms for heterogeneous tasks assignment problem in 'crowdsourcing'. The budgeted market we consider consists of multiple task requesters and multiple IoT devices as task executers; where each task requester is endowed with a single distinct task along with the publicly known budget. Also, each IoT device has valuations as the cost for executing the tasks and quality, which are private. Given such scenario, the objective is to select a subset of IoT devices for each task, such that the total payment made is within the allotted quota of the budget while attaining a threshold quality. For the purpose of determining the unknown quality of the IoT devices we have utilized the concept of \emph{peer grading}. In this paper, we have carefully crafted a \emph{truthful budget feasible} mechanism; namely TUBE-TAP for the problem under investigation that also allows us to have the true information about the \emph{quality} of the IoT devices. The simulations are performed in order to measure the efficacy of our proposed mechanism.     
\end{abstract}



\section{Introduction}
\label{s:intro}
Over the past decades, most of the works in \emph{crowdsourcing} \footnote{https://www.wired.com/2006/06/crowds/}\cite{Howe2006}\cite{Slivkins:2014:ODM:2692359.2692364} mainly circumvent around tackling one of the major challenges of \emph{how to motivate the crowd workers to participate in the system?} One solution that is appreciated a lot in this direction is, to incentivize the task executers. This gave rise to several other open questions: 1) Which task executers to be hired? 2) How the task requester(s) can be aware about the quality of the task executers (or crowd workers)? 3) What amount is to be paid to the task executers for their services, so that they are not dishearten and are motivated to participate in future, in similar type of systems? Answering to the above raised questions, substantial amount of works have been done in these directions \cite{Bhat:2016:TMB:2937029.2937172}\cite{DBLP:journals/corr/abs-1305-6705}\cite{DBLP:conf/hcomp/GoelNS14}\cite{JAIN201844}\cite{Jain:2016:DMM:2936924.2936941}\cite{Luo:2016:IMD:2885506.2837029}\cite{10.1109:MIC.2012.70}. Unlike the works in \cite{Xu:2017:BOA:3091125.3091431}\cite{5466993}, in this paper, we have investigated the set-up somehow close to the set-up discussed in \cite{DBLP:conf/hcomp/GoelNS14}\cite{DBLP:journals/corr/AssadiHJ15} but with additional constraints: 1) the task executers are the IoT devices instead of human agents, and 2) in order to be aware about the quality of IoT devices, we have utilized the technique of peer grading, that is different from the general practice for identifying the quality of the human agents \cite{JAIN201844}\cite{Bhat:2016:TMB:2937029.2937172}. It is to be noted that, till date, in the crowdsourcing literature this tedious work of determining the quality of the crowd workers is mostly done by the platform or in some cases by the task requesters. This leads to an extra burden on the platform or the task requesters. Also, this scenario makes the process of quality determination centralized. In our peer grading approach, we use to distribute the task executed by the subset of IoT devices to their peers (other IoT devices) for grading purpose. Based on the peers report, the quality IoT devices are selected.\\
\indent The detailing of our proposed model is depicted in Figure \ref{fig:model}. In our model, we have multiple task
requesters and multiple IoT devices (as task executers); where each task requester is endowed with a single task and the maximum amount he/she (henceforth he) can pay is termed as \emph{budget} (or capital). Each IoT device has independent private cost(s) for each task that they will charge for executing. It is to be noted that, the participating IoT devices are intelligent and rational. Due to their rational behaviour they will try to strategize the system. By \emph{strategizing} we mean that these devices can manipulate their private information in order to gain. Given this set-up, our goal is to select the subset of IoT devices for each task such that the total payment made to the IoT devices are within the allotted quota of budget for the task while attaining a threshold quality.
 \begin{figure}[H]
\begin{center}
\includegraphics[scale=0.45]{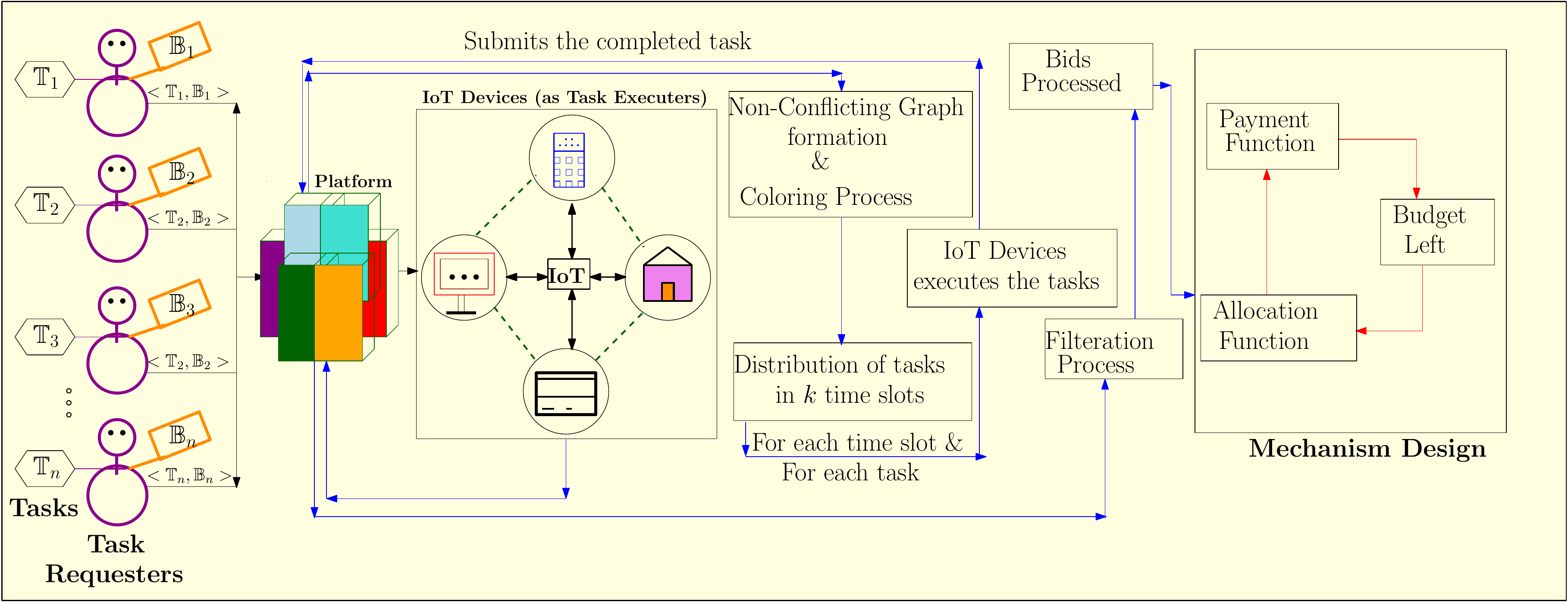}
\caption{A Pictorial representation of proposed model}
\label{fig:model}
\end{center}
\end{figure}
\indent Following the general work flow of the crowdsourcing, firstly, each task requester submits the endowed task and the publicly known budget to the platform. On receiving the tasks and the endowed capital for the respective task from the task requesters, the platform publishes the tasks to the outside world for the execution purpose. Now, each IoT device present on the other side of the market opts for the subset of tasks of their interest for execution and report to the platform along with the amount they will charge for executing each task. Based on their reported interests, the platform assigns the tasks to the IoT devices. Without loss of generality, it is assumed that each IoT device will execute all of its tasks for which it has shown interest and each IoT device executes single task at a time. Now, the immediate question is: \emph{How to preserve the assumptions made for the problem under investigation?} One solution that can be thought of is to place each of the task of an IoT device on which it has shown interest into different time slots (here, time slots could be thought of as \emph{morning}, \emph{afternoon}, and \emph{evening} for a day) that will help in keeping our assumptions alive.\\
\indent \emph{Say, for example an IoT device has shown his interest over 3 tasks. In such case, one task will be scheduled in the morning, another task in the afternoon, and the third task could be scheduled in the evening.}\\
\indent After the distribution of tasks into different time slots, the IoT devices executes the assigned task(s) and submit to the platform as depicted in Figure \ref{fig:model}. Now, the next challenge that comes into the pocket of the \emph{platform} is to determine the quality of the IoT devices. For this purpose, the idea of peer grading \cite{DBLP:journals/corr/AlfaroSP16}\cite{T.roughgarden_201617} is utilized in our set-up. It is to be noted that, in each time slot and for each task, the process of \emph{peer grading} is carried out iteratively. The process iterates until each IoT device is not graded by the peers. At the end of each iteration of the peer grading phase, the IoT device top rated (or graded) by most of the peers, is selected. Finally, the peer grading process returns a set of quality IoT devices for each task. Now, given the set of quality IoT devices for each task, we have to select a subset of IoT devices such that the total payment made are within the allotted quota of budget. As the IoT devices are \emph{strategic} in our setting, so for this reason we have modelled the above discussed set-up using mechanism design. \\
\indent In this paper, we have carefully crafted a \emph{\textbf{\underline{t}}ruthful b\textbf{\underline{u}}dget feasi\textbf{\underline{b}}le m\textbf{\underline{e}}chanism} for the \emph{\textbf{\underline{t}}ask \textbf{\underline{a}}llocation \textbf{\underline{p}}roblem} (TUBE-TAP) motivated by \cite{Singer_2016}\cite{8432319}, that also allow us to have the true information about the \emph{quality} of the IoT devices\footnote{It is to be noted, our proposed system is applicable equally to the system where there are human agents instead of IoT devices in the role of \emph{task executers}.}.
\subsection{Summary of Key Contributions}
The main contributions of this paper are:
\begin{itemize}
\item [-] We have investigated the heterogeneous task assignment problem in IoT based crowdsourcing through the lens of mechanism design.
\item [-] We have developed a \emph{truthful budget feasible} mechanism; namely TUBE-TAP motivated by \cite{Singer_2016}\cite{8432319} for the problem under investigation.
\item [-] We prove that TUBE-TAP satisfies several economic properties such as \emph{truthfulness}, and \emph{budget feasibility}.     
 \item [-] The simulations are done for comparing the TUBE-TAP with a carefully crafted benchmark mechanism.
\end{itemize}
\subsection{Paper Organization}
The remainder of this paper is organized as follows. In section \ref{sec:rw} the prior works explored in the direction of crowdsourcing is discussed. In Section \ref{s:prelim}, we describe our proposed system model in detailed manner. We then present our proposed mechanism namely TUBE-TAP for the problem discussed in section \ref{s:pm}. Further analysis of TUBE-TAP is carried out in section \ref{sec:faba}. In section \ref{sec:ef} the experimental results are presented and discussed. In section \ref{se:cons} the paper is concluded and the future directions are coined.

\section{Related Works}\label{sec:rw}
This section contains a short description of the previous works and developments in this area. The discussion will mainly circumvent around the works regarding \emph{incentive} policies utilized in the past for dragging large number of participants, and \emph{quality} of the executed works supplied by the task executers, in crowdsourcing. In order to get the detailed overview of the field and the current research trends we recommend readers to go through \cite{Howe2006}\cite{6113213}\cite{Slivkins:2014:ODM:2692359.2692364}\cite{Mazlan2018}.\\
\indent In past there have been an extensive body of works discussing about the major challenges in crowdsourcing \cite{Slivkins:2014:ODM:2692359.2692364} and in some cases providing the solution approach \cite{Bhat:2016:TMB:2937029.2937172}\cite{Jain:2016:DMM:2936924.2936941}\cite{Luo:2016:IMD:2885506.2837029}. The two major challenges in crowdsourcing that have dragged the interest of large community are: 1) How to motivate large group of common people to participate in this system as they are \emph{rational}. 2) How to verify that the executed tasks supplied by the agents are upto the mark. Answering to the issue raised in point 1 several schemes are proposed that incentivizes the participating agents in some sense \cite{Luo:2016:IMD:2885506.2837029}\cite{DBLP:conf/hcomp/GoelNS14}\cite{Bhat:2016:TMB:2937029.2937172}\cite{8031314}. In \cite{Reddy:2010:EMP:1864349.1864355} the \emph{fixed price scheme} is proposed in which the platform or in some applications task requesters provide some fixed amount to the crowd workers. The drawback to such approach is that the agents are paid less than the effort supplied. This pricing structure has resolved the issue raised in point 1 to some extent but not completely.\\
\indent In \cite{5466993} reverse auction based incentive scheme called \emph{'RADP'} is proposed for the setting with single task requester having single task that is to be given to the multiple crowd workers on the other side of the market. In this, some pre-defined number of task executers with lowest bid values are selected and paid their revealed bid price. One issue with this solution approach is that the participants those who are giving much high effort may bid high and may not be reaching in the selection zone. In some sense, this pricing model may demotivate the quality agents. To overcome the issue raised in \cite{5466993}, in \cite{Lee2010PMC} a reverse auction based incentive scheme with virtual participant credit (RADP-VPC) is proposed. Here, the idea is, the participant who lost in the current iteration is provided a specific reward for the participation and if the loser participate further then this virtual credit will be subtracted from his original bid value that may lead to the consideration of participants in further auction rounds. One drawback with this strategy is that the participant can set the high bid value as his/her payment. Following works in \cite{Reddy:2010:EMP:1864349.1864355}\cite{Lee2010PMC}, a better auction models were proposed \cite{Zhao2014CCC}\cite{7218673}\cite{6848055}.\\
\indent In \cite{DBLP:conf/hcomp/GoelNS14}, an effort has been made to design a truthful budget feasible mechanism for crowdsourcing in an online environment for the set-up consisting of single task requester endowed with multiple tasks and there are multiple task executers on the other side of the market. The task executers along with the private cost have different skills based on which they show their interest to perform certain subset of tasks. The goal is to select subset of task executers so that the total payment made to the task executers are within budget. In the similar line, the work by \cite{Xu:2017:BOA:3091125.3091431} is carried out where, the set-up consists of multiple tasks with deadlines that are to be executed by the pool of workers that arrive online. Each of the worker has the known set of tasks that he/she can perform and based on that the task is assigned to the workers before its deadline. The goal is to design an online-assignment policy such that the total expected profit is maximized subject to budget and deadline constraint. \\     
\indent However, one of the major set back of the literature covered till now is that the quality of the data supplied or more formally, the quality of the crowd workers are not taken into picture. Some quality adaptive schemes are discussed in \cite{JAIN201844}\cite{DBLP:journals/corr/abs-1305-6705}. In this paper, an effort has been made to design a quality adaptive truthful budget feasible mechanism for one of the scenarios of 'crowdsourcing'. We have utilized the concept of '\emph{peer grading}' for determining the quality of the IoT devices.

\section{System Model and Problem Formulation}\label{s:prelim}

%
%
%
%
%

In this section, we present the formal statement of our problem. We consider \emph{n} task requesters $\mathbb{R} = \{\mathbb{R}_1,\mathbb{R}_2, \ldots, \mathbb{R}_n\}$ each carrying a single distinct task. The set of tasks is represented as $\mathbb{T} = \{\mathbb{T}_1, \mathbb{T}_2, \ldots, \mathbb{T}_n\}$; where $\mathbb{T}_i$ is the $i^{th}$ task held by $\mathbb{R}_i$ task requester. The set-up, where each \emph{task requester} carrying multiple tasks is reserved for our future work. Also, along with a task, each task requester $\mathbb{R}_i \in \mathbb{R}$ has an upper bound on the amount he/she (henceforth he) can pay for getting his task executed, known as \emph{budget} represented as $\mathbb{B}_i$. The budget vector for all the task requesters is given as $\mathbb{B}=\{\mathbb{B}_1, \mathbb{B}_2, \ldots, \mathbb{B}_n\}$.  Each of the task requester submits the endowed task along with their publicly known budget to the \emph{platform}. The \emph{platform} projects these tasks to the IoT devices present on the other side of the market. In our set-up, we have \emph{m} IoT devices represented by the set $\mathbb{E} = \{\mathbb{E}_1,\mathbb{E}_2, \ldots, \mathbb{E}_m\}$. It is considered that $m \gg n$. Afterwards, each IoT device shows its interest over the set of tasks for execution purpose to the platform along with the maximum value it can charge for executing each task. Utilizing the submitted information by the IoT devices, we can have the set of IoT devices that are interested to execute the task $\mathbb{T}_j$ and is given as $\mathbb{\boldsymbol{I}}^j = \{\mathbb{E}_1,\mathbb{E}_2, \ldots, \mathbb{E}_{k_j}\}$; where $k_j$ is the number of IoT devices associated with task $\mathbb{T}_j$. The set $\mathbb{\boldsymbol{I}}$ = $\{\mathbb{\boldsymbol{I}}^1, \mathbb{\boldsymbol{I}}^2, \ldots, \mathbb{\boldsymbol{I}}^n\}$ represents the associated set of IoT devices for all the \emph{n} tasks. The maximum value an IoT device $\mathbb{E}_i$ will charge for executing a task $\mathbb{T}_j$ is given as $v_i^j$ called the valuation. The valuations of the IoT devices are \emph{private} in nature. It is to be noted that the IoT devices are \emph{strategic} in nature. By \emph{strategic} we mean that the IoT devices can misreport their private valuation in order to gain. So, it is better to represent the bid value of each IoT device $\mathbb{E}_i$ for executing the task $\mathbb{T}_j$ as $b_{i}^j$. $b_{i}^{j} = v_{i}^{j}$ represents the fact that the IoT device $\mathbb{E}_i$ report its private valuation $b_{i}^{j}$ for the task $\mathbb{T}_j$ in a \emph{truthful} manner. The bid vector for each task $\mathbb{T}_j$ is given as $b_j = \{b_{1}^{j}, b_{2}^{j}, \ldots, b_{k_j}^{j}\}$. The set $b = \{b_1, b_2, \ldots, b_n\}$ represents the set of bid vectors of the IoT devices for all the tasks. Based on the set $\mathbb{\boldsymbol{I}}$, a non-conflict graph $\mathbb{G}(\mathcal{V},\mathcal{E})$ is constructed; where $\mathcal{V}$ is the set of vertices representing the tasks. An edge $(i,~j) \in \mathcal{E}$ between the tasks \emph{i} and \emph{j} represents the fact that the pair $(i,~j)$ have at least one IoT device that is associated to both the tasks. Once the graph is constructed, next target is to place the tasks along with their respective IoT devices to different time slots so as to preserve the assumptions made. The set of time slots to which all the tasks are placed in, is given as $\tau = \{1,2, \ldots, \kappa\}$; where $\kappa$ is the number of time slots available. Afterwards, in \emph{peer grading} phase, each IoT device $\mathbb{E}_i$ provides a ranked list over the subset of IoT devices associated with task $\mathbb{T}_j$ denoted by $\succ_{i}^{j}$, where $\mathbb{E}_\ell$ $\succ_{i}^{j}$ $\mathbb{E}_k$ means that the IoT device $\mathbb{E}_i$ ranks $\mathbb{E}_\ell$ above $\mathbb{E}_k$. For each task $\mathbb{T}_j$, this \emph{peer grading} process will result in the quality IoT devices. Now, the next target is to select the subset of IoT devices from the quality IoT devices for each task and decide their payment. The allocation vector for all the tasks is given as $\mathbb{A}=\{\mathbb{A}_1, \mathbb{A}_2, \ldots, \mathbb{A}_n\}$; where $\mathbb{A}_i$ contains the IoT devices selected for task $\mathbb{T}_i$. Similarly, the payment vector of all the IoT devices for \emph{n} tasks is given as $\mathbb{\boldsymbol{P}} = \{\mathbb{\boldsymbol{P}}_1, \mathbb{\boldsymbol{P}}_2, \ldots, \mathbb{\boldsymbol{P}}_n\}$. Here, $\mathbb{\boldsymbol{P}}_j$ is the payment vector of IoT devices associated with task $\mathbb{T}_j$ and is given as $\mathbb{\boldsymbol{P}}_j = \{\mathbb{\boldsymbol{P}}_1^{j}, \ldots, \mathbb{\boldsymbol{P}}_{k_j}^{j}\}$; where $\mathbb{\boldsymbol{P}}_i^j$ is the payment received by IoT device $\mathbb{E}_i$ for executing task $\mathbb{T}_j$. The utility achieved by any $i^{th}$ IoT device for each task $\mathbb{T}_j$ could be defined as the payment it received for executing task $\mathbb{T}_j$ minus the valuation of an IoT device for task $\mathbb{T}_j$, if it is considered for task $\mathbb{T}_j$; otherwise 0. This can be represented formally as:
    \begin{equation}                   
 u_i^j =
  \begin{cases}
  \mathbb{\boldsymbol{P}}_i^j - v_{i}^{j}, & \textit{if $\mathbb{E}_i$ is considered for task $\mathbb{T}_j$} \\
   0,         & \textit{Otherwise}
  \end{cases}
  \end{equation}
\begin{definition}[Incentive Compatible (IC) \cite{NNis_Pre_2007}] A mechanism is said to be truthful or IC if reporting true valuation by any agent \emph{i} will maximize its utility irrespective of the valuations of other agents. Formally in our case, for any arbitrary IoT device $\mathbb{E}_i$ for task $\mathbb{T}_j$ the utility relation is $u_{i}^{j} = \mathbb{\boldsymbol{P}}_i^j - v_{i}^{j} \geq \mathbb{\boldsymbol{P}}_i^j - b_{i}^{j} = \hat{u}_{i}^{j}$; where $u_{i}^{j}$ is the utility when $\mathbb{E}_i$ reports true value and $\hat{u}_{i}^{j}$ is the utility when reporting the bid other than the true value $b_{i}^j \neq v_i^j$.

\end{definition}

\begin{definition}[Individual Rationality (IR) \cite{NNis_Pre_2007}]
A mechanism is said to be individually rational if every agent \emph{i} results in a non-negative utility. More formally in our case, $u_{i}^{j} \geq 0$ when participating in the system  
\end{definition}

\begin{definition}[Budget Feasibility (BF) \cite{Singer_2016}]
A mechanism is said to be budget feasible if the total payment made to the agents are within total budget. More formally in our case, fix a task $\mathbb{T}_j$ we have, $\sum\limits_{i=1}^{k_j} \mathbb{\boldsymbol{P}}_{i}^j \leq \mathbb{B}_j$. 
\end{definition}

\section{Proposed Mechanism: TUBE-TAP}\label{s:pm}
In this section, we have proposed a \emph{truthful} mechanism namely TUBE-TAP for the problem under investigation. The main components of the TUBE-TAP are: \emph{Time slot allocation heuristic}, \emph{Quality determination rule}, and \emph{Allocation and payment rule}.
\subsection{Time slot allocation heuristic}\label{sub:tsah}
The underlying idea behind proposing \emph{Time Slot Allocation Heuristic} motivated by\footnote{https://www.youtube.com/watch?v=dJfQQNY7NdU} is to distribute the tasks into different time slots, so that: (a) the IoT devices gets the privilege to execute all the tasks for which they have shown their interest; (b) each IoT device executes a single task at a time.  

\subsubsection{Outline of Time slot allocation heuristic}\label{subsec:1}
\begin{mdframed}[backgroundcolor=gray230]
\begin{center}\textbf{\underline{Time slot allocation heuristic}}\end{center}
\paragraph*{\textbf{First Phase:}}
\begin{enumerate}
\item Pick a task $\mathbb{T}_i$ which has less than $\kappa$ adjacent tasks in a graph $\mathbb{G}$.
\item Put $\mathbb{T}_i$ on the stack and remove it along with the incident edges from the graph $\mathbb{G}$.
\item Repeat step 1 and 2, until the graph $\mathbb{G}$ is non-empty.  
\end{enumerate}
\paragraph*{\textbf{Second Phase:}}
In each iteration:
\begin{enumerate}
\item Pop the task present at the top of the stack. 
\item Assign it the lowest numbered time slot that is not assigned to any of its neighbouring tasks.
\end{enumerate}
\end{mdframed}
\subsubsection{Detailed Time slot allocation heuristic}\label{subsec:2}
This section explains the detailing of the \emph{Time Slot Allocation Heuristic} presented in the Algorithm \ref{algo:1}. As in the outline of the \emph{Time Slot Allocation Heuristic} in subsection \ref{subsec:1}, it is discussed that it is a two phase mechanism. The first phase of the mechanism is depicted in line $2-9$ of Algorithm \ref{algo:1}. In each iteration of \emph{while} loop in line $2-9$, a task with neighbours less than the $\kappa$ ($\kappa$ time slots are available) is picked-up and is pushed into the stack $S$. Next, the recently pushed task is removed from the graph $\mathbb{G}$ along with its incident edges. In the second phase, shown in line $10-14$ of Algorithm \ref{algo:1}, the actual process of time slots allocation is carried out. For each iteration of \emph{while} loop in line 10-14, the currently present top element is popped out of the stack $S$ and held in $\Bbbk$ data structure. The element held in $\Bbbk$ data structure is added back to graph $\mathbb{G}$.
\IncMargin{0.2em}
\begin{algorithm}[H]\label{algo:1}
\DontPrintSemicolon
\SetNoFillComment

\caption{Time slot allocation heuristic ($\mathbb{G}$, $\kappa$)}
	$\mathbb{G}' \leftarrow \mathbb{G}$, \emph{S} $\leftarrow$ $\phi$ \\
	\While{$\mathbb{G} \neq \phi$}
	{
	\ForEach{$\mathbb{T}_j$ $\in$ $\mathcal{V}$}
	{
          \If{$|adj(\mathbb{T}_j)|$ $<$ $\kappa$}
          {
        Push($S$, $\mathbb{T}_j$) \tcp*{Task $\mathbb{T}_j$ is pushed into the stack $S$}
        $\mathbb{G} \leftarrow \mathbb{G} \setminus \{\mathbb{T}_j\}$ \tcp*{Task $\mathbb{T}_j$ is removed from $\mathbb{G}$}
        }
        }
        }
        \While{$S \neq \phi$}
        {
                  $\Bbbk$ $\leftarrow$ Pop($S$)  \tcp*{$\Bbbk$ holds an element popped-up from stack $S$}
                  $\mathbb{G} \leftarrow \mathbb{G} \cup \{\Bbbk\}$ \tcp*{Construct graph $\mathbb{G}$ by utilizing the neighbours information from $\mathbb{G}'$}
                  Assign $\Bbbk$ the lowest numbered time slot that is not assigned to any of its neighbours.
        }
                       
        \Return $\mathbb{G}$\\
\end{algorithm}
\IncMargin{0.2em}
Each time a task is added in a graph $\mathbb{G}$ the information about neighbouring tasks is fetched from $\mathbb{G}'$ graph. Now, the task added in current iteration is assigned a lowest numbered time slot that is not assigned to its neighbours using line 13. The \emph{while} loop terminates once the stack is empty, or in other words each task is assigned a time slot. Finally, in line 15 a graph $\mathbb{G}$ containing the information about the assigned time slot to each of the task is returned.   
\begin{example}\label{sub:ncgf}
For the understanding purpose, we have considered 5 tasks and 20 IoT devices. 
\begin{figure}[H]
        \begin{subfigure}[b]{0.40\linewidth}
        \centering
                \includegraphics[scale=0.65]{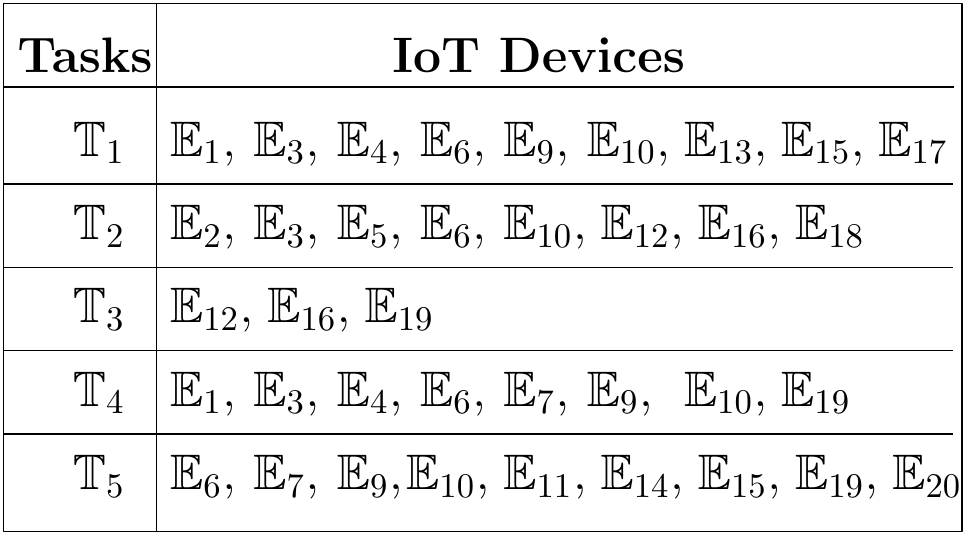}
                \caption{IoT devices showing interests over tasks}
                \label{fig:1(a)}
        \end{subfigure}%
        ~~~
         \begin{subfigure}[b]{0.30\linewidth}
        \centering
                \includegraphics[scale=0.7]{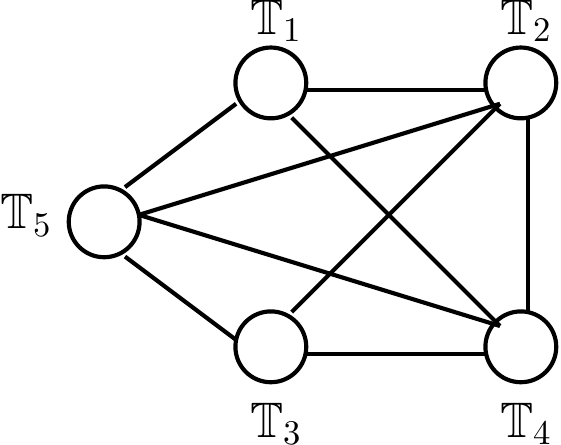}
                \caption{Non-conflict graph}
                \label{fig:1(b)}
        \end{subfigure}%
        ~~~
        \begin{subfigure}[b]{0.30\linewidth}
        \centering
                \includegraphics[scale=0.7]{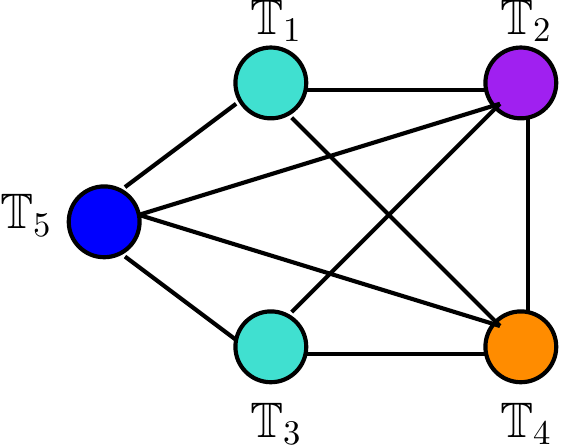}
                \caption{Time slot allocation to tasks}
                \label{fig:1(c)}
        \end{subfigure}
        \caption{Detailed illustration of Algorithm \ref{algo:1}}\label{fig:1}
\end{figure}
\indent Let the budget associated with the 5 tasks are: $\mathbb{B}_1 = 50\$$, $\mathbb{B}_2 = 25\$$, $\mathbb{B}_3 = 30\$$, $\mathbb{B}_4 = 60\$$, and $\mathbb{B}_5 = 15\$$. For each task, the interested set of IoT devices is depicted in Figure \ref{fig:1(a)}. Figure \ref{fig:1(a)} will be read as, say for example consider task $\mathbb{T}_3$. The IoT devices that are interested to execute task $\mathbb{T}_3$ are $\mathbb{E}_{12}$, $\mathbb{E}_{16}$, and $\mathbb{E}_{19}$. Based on the configuration shown in Figure \ref{fig:1(a)}, a graph $\mathbb{G}$ is formed as shown in Figure \ref{fig:1(b)}. Note that the tasks $\mathbb{T}_1$ and $\mathbb{T}_3$ do not share any common IoT devices, so they do not have an edge between them. The result of which they can be placed in the same time slot. In our case the tasks $\mathbb{T}_1$ and $\mathbb{T}_3$ belong to the same time slot, say time slot 1. Tasks $\mathbb{T}_2$, $\mathbb{T}_4$, and $\mathbb{T}_5$ share a common IoT devices so they have an edge between them and will be placed in three different time slots. Also, these tasks have an edge with $\mathbb{T}_1$ and $\mathbb{T}_3$ so they can not be placed in time slot 1. Let the task $\mathbb{T}_2$, $\mathbb{T}_4$, and $\mathbb{T}_5$ are placed in time slot 2, time slot 3, and time slot 4 respectively.
\end{example}
\subsection{Quality Determination Mechanism}\label{subsec:2b}
As the quality of the IoT devices are unknown, in this section a mechanism is proposed for determining the quality of the IoT devices. First, the outline of the \emph{Quality Determination Mechanism} is presented in sub section \ref{subsec:qdr} and in sub section \ref{subsec:qdrd} the detailed version of the mechanism is
discussed. 
\subsubsection{Outline of The Quality Determination Mechanism}\label{subsec:qdr}
\begin{mdframed}[backgroundcolor=gray230]
\begin{center}\textbf{\underline{Quality Determination Mechanism}}\end{center}
\noindent \textbf{Repeat:}
\begin{enumerate}
\item For each task $T_i$, assign \emph{r} IoT devices to $r'$ other IoT devices for the ranking purpose; here $r' \gg r$.
\item Select an IoT device that appears at first place in most of the rankings.
\end{enumerate}
\textbf{Until:} Each IoT device is considered for the ranking.  
\end{mdframed}
\subsubsection{Detailed Quality Determination Mechanism}\label{subsec:qdrd}
This section presents the detailing of the \emph{Quality Determination Mechanism}. Prior to this mechanism, the \emph{Main Routine} is presented in Algorithm \ref{algo:4}. The idea behind providing the \emph{Main Routine} is to capture each task of the system present in different time slots.

\begin{algorithm}[H]\label{algo:4}
\DontPrintSemicolon
\SetNoFillComment
    \SetKwInOut{Output}{Output}
\caption{Main Routine ($\mathbb{G}$, $\mathbb{B}$, $\mathbb{\boldsymbol{I}}$, $\tau$, $\mathbb{T}$, $b$)}
      \Output{ $\mathbb{A}$, $\mathbb{\boldsymbol{P}}$}
	\ForEach{$i \in \tau$}
	{
	    \ForEach{$\mathbb{T}_j \in i$}
	    {
	    ($\pi^j$, $\tilde{b}_j$) $\leftarrow$ Quality Determination Mechanism ($\mathbb{T}_j$, $\mathbb{\boldsymbol{I}}^j$)\\
	   ($\mathbb{A}'_j,\mathbb{\boldsymbol{P}}'_j$) $\leftarrow$  Allocation and Payment Rule ($\pi^j$, $\tilde{b}_j$, $\mathbb{B}_j$)\\
	   $\mathbb{A} \leftarrow \mathbb{A} \cup \mathbb{A}'_j$\\
	   $\mathbb{\boldsymbol{P}} \leftarrow \mathbb{\boldsymbol{P}} \cup \mathbb{\boldsymbol{P}}'_j$\\
	}
	}
        \Return $\mathbb{A}$, $\mathbb{\boldsymbol{P}}$\\	     
\end{algorithm}
In main routine, line $1-8$ keeps track of each time slot and in each time slot each task is taken care by line $2-7$. Line 9 returns the allocation and payment vectors for all the tasks in the system. In Algorithm \ref{algo:2}, initialization of data structures are done in line 1. In line 2, $\Psi'_j$ and $\Psi_j$ data structures keeps the copy of the IoT devices that execute the task $\mathbb{T}_j$. 
The \emph{do while} loop in line 3-14 iterates until all the IoT devices got ranked. Using line 4, \emph{r} random IoT devices are picked up that are to be ranked and stored in the data structure $\Psi$. Similarly, in line 5, the $r'$ IoT devices other than that are selected by line 4 of Algorithm \ref{algo:2} are considered for the ranking process and stored in data structure $\varphi$. Here, $r' \gg r$. Line 6 assigns the completed task of each IoT device in set $\Psi$ to each of the IoT device $\mathbb{E}_i$ in $\varphi$ for ranking purpose.
  
\begin{algorithm}[H]\label{algo:2}
\DontPrintSemicolon
\SetNoFillComment
        \SetKwRepeat{Do}{do}{while}
    \SetKwInOut{Output}{Output}
\caption{Quality Determination Mechanism ($\mathbb{T}_j$, $\mathbb{\boldsymbol{I}}^j$)}
      \Output{ $\mathbf{\Phi}_j \leftarrow \phi$}
	$\Psi \leftarrow \phi$, $\varphi \leftarrow \phi$, $\mathcal{N}' \leftarrow \phi$, $\beta \leftarrow \phi$\\
	         $\Psi'_j = \Psi_j = \mathbb{\boldsymbol{I}}^j$ \tcp*{$\Psi'_j$ and $\Psi_j$ keeps the copy of IoT devices that executes $\mathbb{T}_j$.}
	         \Do{$\Psi_j \neq \phi$}
	         {
	         $\Psi$ $\leftarrow$ Pick\_random ($\Psi_j$, $r$) \tcp*{ Pick \emph{r} IoT devices from $\Psi_j$.}
	         $\varphi$ $\leftarrow$ Pick\_random ($\Psi'_j \setminus \Psi$, $r'$) \tcp*{Pick $r'$ IoT devices from $\Psi'_j \setminus \Psi$.}
	         Assign the completed task $\mathbb{T}_j$ of each IoT devices in $\Psi$ to the IoT devices in $\varphi$.\\
	         \ForAll{$\mathbb{E}_i \in \varphi$}
	         {
	           $\beta \leftarrow$ Select\_best($\succ_i^j$) \tcp*{Select top ranked IoT device from ${\mathbb{E}_i}'s$ ranked list for task $\mathbb{T}_j$ given as $\succ_i^j$.}
	           $\mathcal{N}' \leftarrow \mathcal{N}' \cup \{\beta\}$ \tcp*{$\mathcal{N}'$ data structure allows the duplication of elements.}
	         }
	         $\mathbf{\Phi}_j \leftarrow \mathbf{\Phi}_j \cup \{\max\limits_{\mathbb{E}_k \in \mathcal{N}'}\{|S_k|\}\}$ \tcp*{$S_k$ is the set of ${E_k}'s$ in $\mathcal{N}'$.}
	         $\tilde{b}_j$ $\leftarrow$ $\tilde{b}_j$ $\cup$ $\{b_k^j\}$ \tcp*{$\tilde{b}_j$ maintains the bid values of the quality IoT devices.}
	         $\Psi_j \leftarrow \Psi_j \setminus \Psi$          
	           } 
        \Return $\mathbf{\Phi}_j$, $\tilde{b}_j$ 
\end{algorithm}
Using line 7-10 for each iteration of \emph{for} loop record about the top ranked IoT device by each $\mathbb{E}_i \in \varphi$ is kept in the $\mathcal{N}'$ data structure. In line 11, $\mathbf{\Phi}_j$ data structure captures the IoT device that was ranked top by most of the IoT devices for task $\mathbb{T}_j$. Line 13 removes the IoT devices that are ranked in the current iteration from $\Psi_j$. Finally, line 14 returns $\mathbf{\Phi}_j$ that contains the quality IoT devices for task $\mathbb{T}_j$.
 
 \begin{example}\label{sub:ncgf2}
For the detailed illustration of Algorithm \ref{algo:2} we have considered the set-up discussed
\begin{figure}[H]
        \begin{subfigure}[b]{0.33\linewidth}
        \centering
                \includegraphics[scale=0.5]{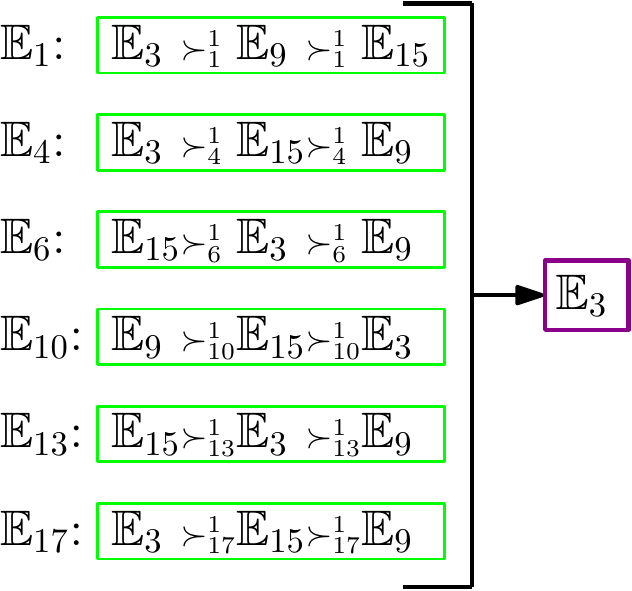}
                \caption{Peer grading ($1^{st}$ iteration)}
                \label{fig:3(a)}
        \end{subfigure}%
        ~~~
         \begin{subfigure}[b]{0.33\linewidth}
        \centering
                \includegraphics[scale=0.5]{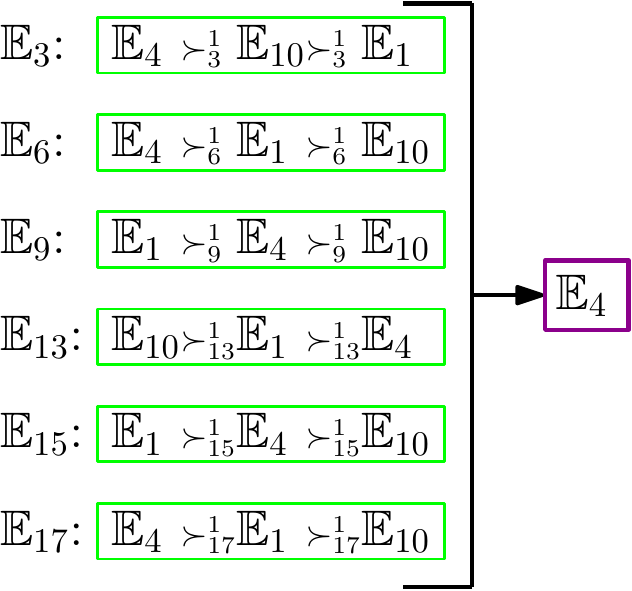}
                \caption{Peer grading ($2^{nd}$ iteration)}
                \label{fig:3(b)}
        \end{subfigure}%
        ~~~
        \begin{subfigure}[b]{0.33\linewidth}
        \centering
                \includegraphics[scale=0.5]{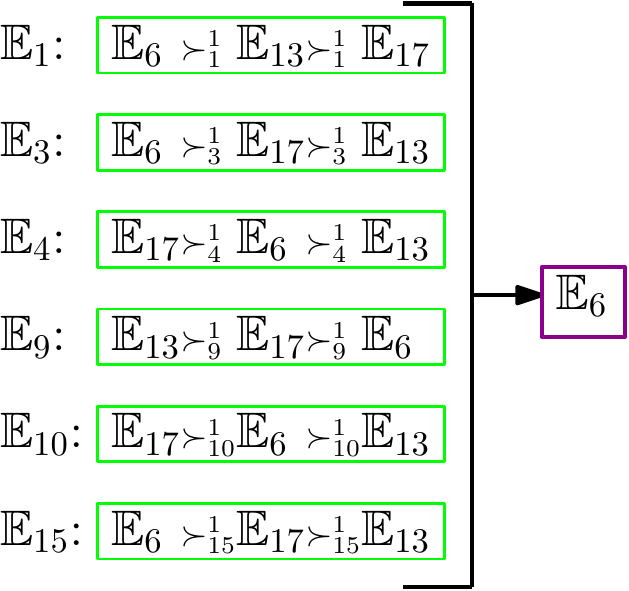}
                \caption{Peer grading ($3^{rd}$ iteration)}
                \label{fig:3(c)}
        \end{subfigure}
        \caption{Detailed illustration of Algorithm \ref{algo:2}}\label{fig:1b}
\end{figure}
 in Example \ref{sub:ncgf}.
 In this example, we have illustrated Algorithm \ref{algo:2} for one task, say task $\mathbb{T}_1$. However, one can follow the similar procedure for the remaining tasks. For the $1^{st}$ iteration of the peer grading process, we have randomly selected 3 IoT devices ($r=3$) say $\mathbb{E}_3$, $\mathbb{E}_9$, and $\mathbb{E}_{15}$ and assigned to the remaining IoT devices for the grading purposes. Next, following the Algorithm \ref{algo:2}, we have to check which IoT device among $\mathbb{E}_3$, $\mathbb{E}_9$, and $\mathbb{E}_{15}$ has been top ranked by the majority of the peers. From Figure \ref{fig:3(a)} one can see that $\mathbb{E}_3$ has been top ranked by the majority of the peers. So, for the time being $\mathbf{\Phi}_1 = \{\mathbb{E}_3\}$. In the similar fashion, we can follow the other iterations of the peer grading process as shown in Figure \ref{fig:3(b)} and Figure \ref{fig:3(c)} and determine the quality IoT devices. 
 At the end of the peer grading process, the set of quality IoT devices for task $\mathbb{T}_1$ is given as $\mathbf{\Phi}_1 = \{\mathbb{E}_3, \mathbb{E}_4, \mathbb{E}_6\}$.
\end{example}
\subsection{Allocation and Payment Rule}\label{subsec:3}

\begin{algorithm}[H]\label{algo:5}
\DontPrintSemicolon
\SetNoFillComment
    \SetKwInOut{Output}{Output}
\caption{Allocation and Payment Rule ($\pi^j$, $\tilde{b}_j$, $\mathbb{B}_j$)}
      \Output{ $\mathbb{A}_j$, $\mathbb{\boldsymbol{P}}_j$}
	\tcc{Allocation Rule}
	Sort($\pi^j$, $\tilde{b}_j$) \tcp*{Sort $\pi^j$ based on $\tilde{b}_j$ as $b_{1}^{j} \leq b_{2}^{j} \leq \ldots \leq b_{\tilde{k}_j}^{j}$; such that $\tilde{k}_j < k_j$}
	$k \leftarrow 1$\\
	\While{$b_{i}^{j} \leq \frac{\mathbb{B}_j}{k}$}
	{
	     $\mathbb{A}_j \leftarrow \mathbb{A}_j \cup \{\mathbb{E}_i\}$\\
	     $k \leftarrow k + 1$  
        }
        \tcc{Payment Rule}
        \ForEach{$\mathbb{E}_i \in \mathbb{A}_j$}
        {
           $\mathbb{\boldsymbol{P}}_i^{j} \leftarrow  \{min\{\frac{\mathbb{B}_j}{k},~b_{k+1}^{j}\}\}$\\
           $\mathbb{\boldsymbol{P}}_j \leftarrow \mathbb{\boldsymbol{P}}_j \cup \{\mathbb{\boldsymbol{P}}_i^{j}\}$  
        }
        \Return $\mathbb{A}_j$, $\mathbb{\boldsymbol{P}}_j$\\	     
\end{algorithm}
  
This section explains the \emph{Allocation and Payment Rule} presented in the Algorithm \ref{algo:5}.
Considering the allocation rule, in line 1 first the quality IoT devices in $\pi^j$ is sorted in increasing order based on the bid vector $\tilde{b}_j$. The variable \emph{k} is initialized to 1.   
The \emph{while} loop in line $3-6$ determines the largest index \emph{k} that satisfies the stopping condition of the \emph{while} loop. The $\mathbb{A}_j$ data structure in line 4 keeps track of winning IoT devices. Talking about the payment rule, for each $\mathbb{E}_i$ in $\mathbb{A}_j$ the minimum among $\frac{\mathbb{B}_j}{k}$ and $b_{k+1}^{j}$ is taken as the payment. Finally, line 11 returns the allocation and payment for the task $\mathbb{T}_j$.

\begin{example}\label{sec:ex}
\label{s:ie}
For understanding the allocation and payment rule, let us continue with the quality IoT devices resulted from Example \ref{sub:ncgf2}. The budget given for task $\mathbb{T}_1$ is 50 \$. The quality IoT devices along with their bid values is depicted in Figure \ref{fig:3a}. Utilizing Algorithm \ref{algo:5} in the set-up shown in  Figure \ref{Fig:3a}, first the IoT devices are sorted in decreasing order of their bid value as shown in Figure \ref{fig:3b}. In our case, from the ordering, first $\mathbb{E}_4$ is picked up and considered as the check $10 \leq \frac{50}{1}$ is satisfied for $\mathbb{E}_4$. Next, $\mathbb{E}_3$ is picked up from the ordering and is also considered because of the similar reason. Next, $\mathbb{E}_6$ is picked up from the ordering and will be not be considered as the check $30 \leq \frac{50}{3}$ is not satisfied. 
 \begin{figure}[H]
\begin{subfigure}[b]{0.33\textwidth}
\begin{center}
                \includegraphics[scale=0.6]{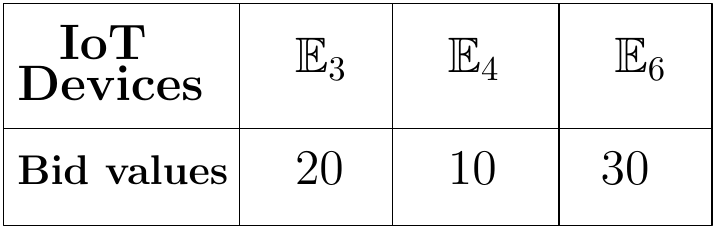}
                \caption{Bid value configuration}
                \label{fig:3a}
                \end{center}
        \end{subfigure}%
        \begin{subfigure}[b]{0.33\textwidth}
        \begin{center}
                \includegraphics[scale=0.6]{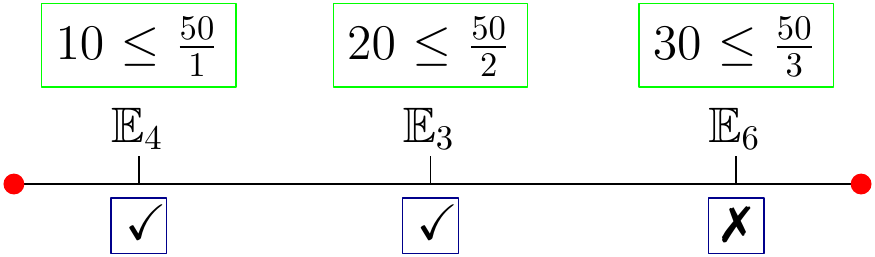}
                \caption{Allocation resulted}
                \label{fig:3b}
                \end{center}
        \end{subfigure}
        \begin{subfigure}[b]{0.33\textwidth}
        \begin{center}
                \includegraphics[scale=0.6]{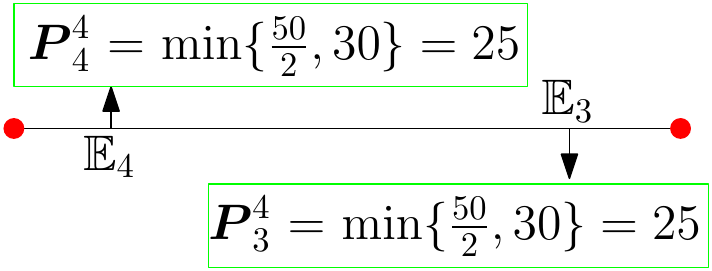}
                \caption{Payment determination}
                \label{fig:3c}
                \end{center}
        \end{subfigure}
        \caption{Detailed illustration of Algorithm \ref{algo:5} (case with $\frac{\mathbb{B}_j}{k}$ as payment)}
        \label{Fig:3}
\end{figure}    
 So, we have $\mathcal{A}_1 = \{\mathbb{E}_4, \mathbb{E}_3\}$ as the winning set. So, we get the \emph{k} value as 2 for our example. Next, the payment calculation of the $\mathbb{E}_4$ and $\mathbb{E}_3$ is presented in Figure \ref{fig:3c}. For $\mathbb{E}_4$ we have $\mathbb{\boldsymbol{P}}_4^1 = min\{\frac{50}{2},~30\}= 25$, and for $\mathbb{E}_3$ we have $\mathbb{\boldsymbol{P}}_3^1 = min\{\frac{50}{2},~30\}= 25$.
\end{example}
\begin{example}
 As in the above example, it can be seen that the payment for both the IoT devices is the left term of the payment rule, so the remaining budget is zero. In order to see when the right term of the payment rule will be coming into picture the example in Figure \ref{Fig:3} is repeated for different bid configuration in Figure \ref{Fig:4}.   
 \begin{figure}[H]
\begin{subfigure}[b]{0.33\textwidth}
\begin{center}
                \includegraphics[scale=0.6]{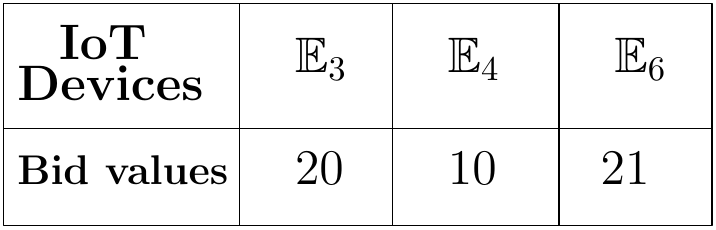}
                \caption{Bid value configuration}
                \label{fig:4a}
                \end{center}
        \end{subfigure}%
        \begin{subfigure}[b]{0.33\textwidth}
        \begin{center}
                \includegraphics[scale=0.6]{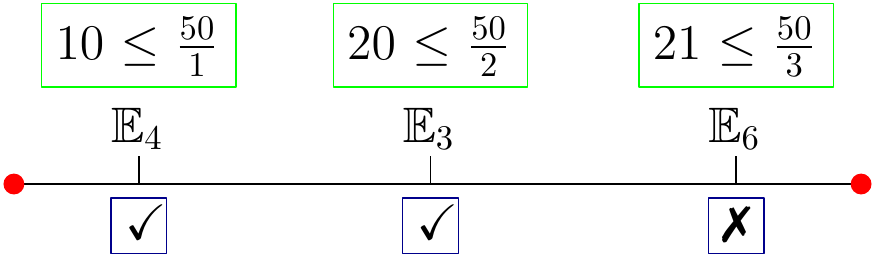}
                \caption{Allocation resulted}
                \label{fig:4b}
                \end{center}
        \end{subfigure}
        \begin{subfigure}[b]{0.33\textwidth}
        \begin{center}
                \includegraphics[scale=0.6]{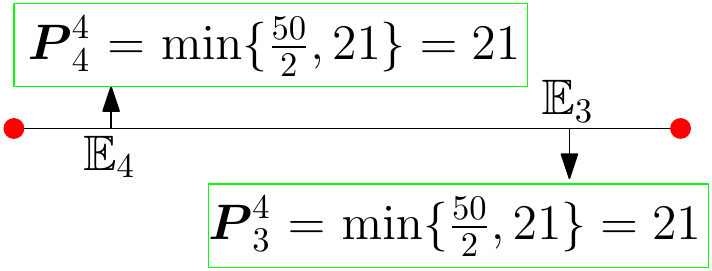}
                \caption{Payment determination}
                \label{fig:4c}
                \end{center}
        \end{subfigure}
        \caption{Detailed illustration of Algorithm \ref{algo:5} (case with $b_{k+1}^j$ as payment)}
        \label{Fig:4}
\end{figure}  
In this example, the allocation set will be similar to what we obtained for Example \ref{sec:ex} as shown in Figure \ref{fig:4b}.   
The payment calculation of the IoT devices $\mathbb{E}_4$ and $\mathbb{E}_3$ is presented in Figure \ref{fig:4c}. For $\mathbb{E}_4$ we have $\mathbb{\boldsymbol{P}}_4^1 = min\{\frac{50}{2},~21\}= 21$, and for $\mathbb{E}_3$ we have $\mathbb{\boldsymbol{P}}_3^1 = min\{\frac{50}{2},~21\}= 21$.
\end{example}



\section{Analysis of TUBE-TAP}\label{sec:faba}
This section presents the analysis of TUBE-TAP.

\begin{proposition}
The proposed mechanism in \cite{Singer_2016} has an approximation ratio of 2.
\end{proposition}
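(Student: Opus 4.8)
The plan is to prove the statement one task at a time: fix a task $\mathbb{T}_j$, observe that the allocation part of the mechanism of \cite{Singer_2016} is exactly the greedy proportional-share rule in Algorithm~\ref{algo:5}, establish a factor-$2$ bound for that rule on $\mathbb{T}_j$, and then sum over tasks (each task carries its own budget $\mathbb{B}_j$, so the per-task bounds add with no interaction). Write $B=\mathbb{B}_j$ and, as in line~1 of Algorithm~\ref{algo:5}, index the quality devices so that $b_1^j\le b_2^j\le\cdots\le b_{\tilde k_j}^j$. The quantity being approximated is the number of quality devices hireable inside the budget, so $\mathrm{OPT}=m$ where $m=\max\{k:\sum_{t=1}^k b_t^j\le B\}$. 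The first step is the routine exchange argument that hiring the $m$ cheapest devices is optimal: swapping any hired device for a cheaper un-hired one never raises the cost and never lowers the cardinality, so any optimal solution can be massaged into the set of $m$ cheapest devices.

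The second step is to read off what Algorithm~\ref{algo:5} actually selects. Since $b_i^j$ is nondecreasing in $i$ while $B/i$ is decreasing in $i$, the quantity $b_i^j-B/i$ is nondecreasing, so once the \textbf{while} test $b_i^j\le B/i$ fails it keeps failing; hence the rule hires exactly the $\ell$ cheapest quality devices, where $\ell=\max\{i:b_i^j\le B/i\}$ (and $\ell=0$ if there is no such $i$). The crux of the whole proof is the inequality $\ell\ge\lceil m/2\rceil$. To get it, fix any index $i$ with $1\le i\le\lceil m/2\rceil$; then $i\le m$ and
\begin{equation*}
B \;\ge\; \sum_{t=1}^{m}b_t^j \;\ge\; \sum_{t=i}^{m}b_t^j \;\ge\; (m-i+1)\,b_i^j \;\ge\; i\,b_i^j ,
\end{equation*}
where the last step uses $m-i+1\ge m-\lceil m/2\rceil+1\ge\lceil m/2\rceil\ge i$ (checked separately for $m$ even and $m$ odd). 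Thus $b_i^j\le B/i$ for every $i\le\lceil m/2\rceil$, so the loop cannot stop before reaching index $\lceil m/2\rceil$, i.e.\ $\ell\ge\lceil m/2\rceil$.

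Putting the two steps together, $\mathrm{OPT}=m\le 2\lceil m/2\rceil\le 2\ell=2\,|\mathbb{A}_j|$, which is the claimed factor $2$; summing $\mathrm{OPT}(\mathbb{T}_j)\le 2|\mathbb{A}_j|$ over all $n$ tasks gives the global statement (and, as a side remark, each winner's payment being at most $B/\ell$ already shows the rule stays within budget). I would close with a tight instance to show the constant is best possible: with $\tilde k_j=3$, $B=2$ and bids $0,\,1+\epsilon,\,1+\epsilon$, the optimum hires two devices while Algorithm~\ref{algo:5} halts after one, so the ratio is exactly $2$. The only genuinely delicate points — hence the ``hard part,'' modest as it is — are justifying the monotone stopping behaviour of the \textbf{while} loop and keeping the floor/ceiling bookkeeping in the displayed chain consistent; the degenerate cases ($m=0$, or the budget covering every quality device) then fall out immediately.
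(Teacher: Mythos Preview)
The paper does not actually supply a proof for this Proposition (it is stated as a citation of \cite{Singer_2016}); the closest thing is the proof of Lemma~\ref{lemma:3}, which establishes the identical bound for TUBE-TAP (whose allocation rule is exactly the mechanism in question). Comparing your argument to that proof: the core is the same --- both hinge on the observation that if the optimum affords $m$ devices then $\sum_{t=\lceil m/2\rceil}^{m} b_t^j\le B$, forcing $b_{\lceil m/2\rceil}^j\le B/\lceil m/2\rceil$. The paper packages this as a short proof by contradiction (assume $|\mathrm{OM}|<k/2$, derive $b_{k/2}>\mathbb{B}/(k/2)$, contradict the inequality above), whereas you give the direct version and are more explicit about the monotone stopping behaviour of the \textbf{while} loop, the ceiling bookkeeping, and a tight instance. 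Your write-up is correct and slightly more careful, but it is the same argument in substance.
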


\begin{lemma}\label{lemma:1}
TUBE-TAP is truthful.
\end{lemma}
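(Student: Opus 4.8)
The plan is to invoke the standard characterization of truthfulness for single‑parameter domains (Myerson's lemma): since each IoT device $\mathbb{E}_i$ is a single‑parameter agent whose private type is its cost $v_i^j$ for task $\mathbb{T}_j$, the mechanism restricted to $\mathbb{T}_j$ is truthful iff (i) its allocation rule is \emph{monotone} — a device selected at bid $b_i^j$ stays selected at any bid $b_i'^j < b_i^j$, holding the other bids fixed — and (ii) every selected device is paid its \emph{critical (threshold) bid}, i.e.\ the supremum of bids at which it would still be selected. First I would note that the time‑slot heuristic (Algorithm~\ref{algo:1}) and the quality‑determination phase (Algorithm~\ref{algo:2}) do not depend on the reported bid values — they depend only on the interest sets and the peer rankings — so it suffices to establish (i) and (ii) for the Allocation and Payment Rule of Algorithm~\ref{algo:5}, applied independently to each task $\mathbb{T}_j$ on its set $\pi^j$ of quality devices. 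Before the exchange argument I would restate Algorithm~\ref{algo:5} in the equivalent form: after sorting the quality bids as $b_1^j\le b_2^j\le\cdots$, select the longest prefix $\mathbb{E}_1,\dots,\mathbb{E}_k$ such that the $\ell$‑th smallest bid satisfies $b_\ell^j \le \mathbb{B}_j/\ell$ for every $\ell\le k$; this avoids off‑by‑one confusion with the in‑loop update of $k$.

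For monotonicity: if a selected device lowers its reported bid, every order statistic $b_\ell^j$ can only weakly decrease, so every inequality $b_\ell^j\le\mathbb{B}_j/\ell$ that previously held continues to hold; hence the winning prefix can only grow, and the device — now at an equal or earlier position in the sorted order — remains inside it. This gives (i).

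For the payment: I would show that $\min\{\mathbb{B}_j/k,\ b_{k+1}^j\}$, the amount charged in Algorithm~\ref{algo:5} (with $k$ the final number of winners for $\mathbb{T}_j$ and $b_{k+1}^j$ the smallest losing bid), is exactly the critical bid of each winner $\mathbb{E}_i$. Fix the other reports and let $\mathbb{E}_i$ raise its declared bid to $\theta$. As long as $\theta\le b_{k+1}^j$, device $\mathbb{E}_i$ stays among the $k$ smallest bids, and the $k$‑th order statistic of the new profile equals $\max\{\theta,\ \text{the }(k-1)\text{ smallest other bids}\}$, which is $\le\mathbb{B}_j/k$ precisely when $\theta\le\mathbb{B}_j/k$; in that range the greedy stopping index is still $\ge k$ and $\mathbb{E}_i$ remains a winner. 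Once $\theta$ exceeds $\min\{\mathbb{B}_j/k,\ b_{k+1}^j\}$, either the stopping condition $b_k^j\le\mathbb{B}_j/k$ fails or $\mathbb{E}_i$ is overtaken by $\mathbb{E}_{k+1}$ and dropped from the prefix. Hence the threshold bid equals the posted payment, giving (ii); combining (i) and (ii) with Myerson's characterization yields truthfulness. Along the way I would also record that every winner satisfies $b_i^j\le b_k^j\le\mathbb{B}_j/k$ and $b_i^j\le b_{k+1}^j$, so $b_i^j\le\min\{\mathbb{B}_j/k,\ b_{k+1}^j\}=\mathbb{\boldsymbol{P}}_i^j$ and no winner regrets participating, which is consistent with individual rationality.

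The main obstacle I expect is the bookkeeping around re‑sorting: when $\mathbb{E}_i$ changes its bid the index it occupies in the sorted list changes, and one must argue carefully that the winning set is still a prefix of the new order and that the greedy stopping index is itself monotone in $\mathbb{E}_i$'s reported bid. A secondary point worth being explicit about is that the quality set $\pi^j$ returned by Algorithm~\ref{algo:2} is determined by peer rankings, not by bids, so a device cannot alter whether it is deemed "quality" by misreporting its cost; this is what lets the truthfulness analysis localize entirely to Algorithm~\ref{algo:5}.
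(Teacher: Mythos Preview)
Your proposal is correct but proceeds along a different axis than the paper. The paper does not invoke Myerson's characterization at all; it argues truthfulness by an elementary four-way case split: fix a task $\mathbb{T}_j$, take an arbitrary \emph{winning} device and check that under-bidding leaves it a winner with unchanged payment while over-bidding either leaves the payment unchanged or drops the device to utility $0$; then take an arbitrary \emph{losing} device and check that over-bidding keeps it a loser, while under-bidding into the winning set yields a payment below its true cost and hence negative utility. Your monotonicity-plus-threshold route is more systematic and makes explicit the one point the paper leaves implicit --- that $\min\{\mathbb{B}_j/k,\, b_{k+1}^j\}$ is exactly the critical bid of every winner, which is precisely what is needed in the paper's Case~2 to conclude that a formerly losing device who bids its way in must be paid below its true valuation. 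The paper's argument has the advantage of being self-contained (no appeal to a characterization theorem); yours has the advantage of making the role of the payment formula transparent and of cleanly isolating the bid-independent phases (Algorithms~\ref{algo:1} and~\ref{algo:2}) from the bid-dependent one (Algorithm~\ref{algo:5}), a decomposition the paper takes for granted without stating.
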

\begin{proof}
The proof is divided into two cases. In the first case, we have taken an arbitrary winning IoT device into consideration and discuss the impact on its gain (or utility), when it \emph{deviates} from its true valuation. In second case, we have considered any arbitrary losing IoT device and analysis similar to Case 1 is done. Fix a task $\mathbb{T}_j$.
\paragraph*{Case 1:} Let us suppose that $i^{th}$ winning IoT device deviates from its true value and reports a bid value $b_{i}^{j} < v_{i}^{j}$. As the IoT device $\mathbb{E}_i$ was winning with $v_{i}^{j}$ it will continue to win with $b_{i}^{j}$ because by reporting value lesser than the true value, it will be appearing early in the ordering. So, its utility will be $\hat{u}_i^j = \mathbb{\boldsymbol{P}}_i^j - v_i^j$ which is same as $u_i^j$. But, if it reports $b_{i}^{j} > v_{i}^{j}$, this gives rise to two possibilities. One possibility could be, it would continue to win by appearing later in the ordering and in that case his utility will be $\hat{u}_i^j = \mathbb{\boldsymbol{P}}_i^j - v_i^j = u_i^j$. Another possibility could be, it may lose by appearing later in the ordering in that case its utility will be $\hat{u}_i^j$ = 0.    
\paragraph*{Case 2:} Let us suppose that $i^{th}$ losing IoT device deviates from its true value and reports a bid value $b_{i}^{j} > v_{i}^{j}$. As the IoT device $\mathbb{E}_i$ was losing with $v_{i}^{j}$ it will continue to lose by $b_{i}^{j}$ because by deviating this way it will be appearing later in the ordering. So, its gain will be $\hat{u}_i^j$ = 0 which is same as $u_i^j$. But, if it reports $b_{i}^{j} < v_{i}^{j}$, then the two possibilities arises. One possibility could be, by deviating this way it could appear early in the ordering but still continue to lose and in that case $\hat{u}_i^j$ = 0 which is same as $u_i^j$. Another possibility could be, it could win, in that case it had defeated the IoT device $\mathbb{E}_k$ with valuation $v_{k}^{j} < v_{i}^{j}$ and hence $b_{i}^{j}<v_{k}^{j}$. In this case, its payment will be less as compared to its true valuation. So, its utility $\hat{u}_i^j = \mathbb{\boldsymbol{P}}_i^j -  v_i^j < 0$. Hence, no gain is achieved.\\
\\
\indent Considering Case 1 and Case 2, it can be concluded that the IoT devices cannot gain by misreporting their true value. So, TUBE-TAP is truthful.  
\end{proof}

\begin{lemma}\label{lemma:2}
In TUBE-TAP, for each task requester $\mathbb{R}_j$ the total payment $\mathbb{\boldsymbol{P}}_j$ made to the IoT devices are within available budget $\mathbb{B}_j$. More formally, $\mathbb{\boldsymbol{P}}_j$ = $\sum\limits_{\mathbb{E}_i \in \mathcal{A}_j} \mathbb{\boldsymbol{P}}_i^j \leq \mathbb{B}_j$. Also, $\sum\limits_{\mathcal{A}_j \in \mathcal{A}} \sum\limits_{\mathbb{E}_i \in \mathcal{A}_j} \mathbb{\boldsymbol{P}}_i^j \leq \sum\limits_{\mathbb{T}_j \in \mathbb{T}}\mathbb{B}_j$.    
\end{lemma}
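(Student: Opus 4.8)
The plan is to analyze the \emph{Allocation and Payment Rule} (Algorithm \ref{algo:5}) directly and show that the payment to every winning IoT device is bounded by $\frac{\mathbb{B}_j}{k}$, where $k$ is the final number of winners for task $\mathbb{T}_j$; summing over the $k$ winners then gives the per-task bound, and summing over all tasks gives the aggregate bound. First I would fix a task $\mathbb{T}_j$ and let $\tilde{b}_j$ be the sorted bid vector of the quality IoT devices returned by the \emph{Quality Determination Mechanism}, with $b_1^j \leq b_2^j \leq \cdots \leq b_{\tilde{k}_j}^j$. Let $k$ denote the value of the loop counter at the moment the \emph{while} loop in line $3$--$6$ terminates, so that exactly $k$ devices $\mathbb{E}_1, \ldots, \mathbb{E}_k$ have been added to $\mathcal{A}_j$; by the loop guard, each such device $\mathbb{E}_i$ (with $i \le k$) was admitted because $b_i^j \le \frac{\mathbb{B}_j}{i}$ held at the time it was considered.

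The key step is the observation that, by the payment rule in line $8$, each winner $\mathbb{E}_i \in \mathcal{A}_j$ receives $\mathbb{\boldsymbol{P}}_i^j = \min\{\frac{\mathbb{B}_j}{k}, b_{k+1}^j\} \le \frac{\mathbb{B}_j}{k}$, where $k$ is the \emph{final} winner count. Since all $k$ winners receive the identical payment (at most) $\frac{\mathbb{B}_j}{k}$, the total payment for task $\mathbb{T}_j$ satisfies
\[
\mathbb{\boldsymbol{P}}_j \;=\; \sum_{\mathbb{E}_i \in \mathcal{A}_j} \mathbb{\boldsymbol{P}}_i^j \;=\; k \cdot \min\!\Big\{\tfrac{\mathbb{B}_j}{k},\, b_{k+1}^j\Big\} \;\le\; k \cdot \tfrac{\mathbb{B}_j}{k} \;=\; \mathbb{B}_j,
\]
which is the first claim. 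For the aggregate claim, I would simply sum the per-task inequality over all $\mathcal{A}_j \in \mathcal{A}$ (equivalently over all $\mathbb{T}_j \in \mathbb{T}$):
\[
\sum_{\mathcal{A}_j \in \mathcal{A}} \sum_{\mathbb{E}_i \in \mathcal{A}_j} \mathbb{\boldsymbol{P}}_i^j \;=\; \sum_{\mathbb{T}_j \in \mathbb{T}} \mathbb{\boldsymbol{P}}_j \;\le\; \sum_{\mathbb{T}_j \in \mathbb{T}} \mathbb{B}_j,
\]
since the winner sets for distinct tasks are disjoint (each $\mathcal{A}_j$ consists of devices selected specifically for $\mathbb{T}_j$) and the payments are non-negative.

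I expect the only genuine subtlety — and the main obstacle to a fully rigorous argument — to be pinning down exactly which index $k$ appears in the payment formula and confirming it is the same $k$ for every winner of a given task. The algorithm increments $k$ inside the \emph{while} loop, so one must be careful that line $8$ uses the terminal value of $k$ (the number of winners) rather than the per-device index at which each winner was admitted; under that reading all winners of $\mathbb{T}_j$ get the common payment $\min\{\mathbb{B}_j/k, b_{k+1}^j\}$ and the telescoped sum is immediate. A secondary point worth a sentence is that the ``right term'' $b_{k+1}^j$ is well-defined (there is a next-cheapest quality device, or else one takes the convention that the $\min$ is just $\mathbb{B}_j/k$), but this never weakens the bound since we only ever use $\mathbb{\boldsymbol{P}}_i^j \le \mathbb{B}_j/k$. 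No deeper combinatorial or probabilistic argument is needed; the lemma is essentially a direct consequence of the stopping condition of the allocation loop together with the form of the payment.
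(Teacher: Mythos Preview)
Your proposal is correct and follows essentially the same approach as the paper: bound each winner's payment by $\mathbb{B}_j/k$ via the $\min$ in the payment rule, multiply by the $k$ winners to get the per-task bound, and then sum over tasks. Your write-up is in fact more careful than the paper's (which simply asserts the $\mathbb{B}_j/k$ bound and sums), since you explicitly address which value of $k$ appears in line~8 and the well-definedness of $b_{k+1}^j$, but the underlying argument is identical.
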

\begin{proof}
Fix a task requester $\mathbb{R}_j$ and a task $\mathbb{T}_j$. From the construction of TUBE-TAP, it is clear that, the maximum payment that any winning IoT device will be paid is $\frac{\mathbb{B}_j}{k}$; where \emph{k} is the largest index obtained in the ordering of IoT devices that satisfies $b_{k}^{j} \leq \frac{\mathbb{B}_j}{k}$. Now, the total payment $\mathbb{\boldsymbol{P}}_j$ is given as:
\begin{equation*}
\mathbb{\boldsymbol{P}}_j = \sum\limits_{\mathbb{E}_i \in \mathcal{A}_j} \mathbb{\boldsymbol{P}}_i^j \leq \sum_{\mathbb{E}_i \in \mathcal{A}_j} \frac{\mathbb{B}_j}{k} = \frac{\mathbb{B}_j}{k} \times k = \mathbb{B}_j 
\end{equation*}
From here we can say that, $\mathbb{\boldsymbol{P}}_j \leq \mathbb{B}_j$. As this is true for any task $\mathbb{T}_j$, so the budget feasibility will hold for all the available tasks $i.e.$ $\sum\limits_{\mathcal{A}_j \in \mathcal{A}} \sum\limits_{\mathbb{E}_i \in \mathcal{A}_j} \mathbb{\boldsymbol{P}}_i^j \leq \sum\limits_{\mathbb{T}_j \in \mathbb{T}}\mathbb{B}_j$. This completes the proof. 
\end{proof}

\begin{lemma}\label{lemma:3}
The allocation resulted by TUBE-TAP is at most 2 allocation away from the optimal one $i.e.$ $OPT \leq 2 \times OM$; where OPT is the optimal allocation and OM is the allocation resulted by TUBE-TAP. 
\end{lemma}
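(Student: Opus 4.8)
The plan is to read $OM$ as the number $K := |\mathcal{A}_j|$ of quality IoT devices that TUBE-TAP selects for a fixed task $\mathbb{T}_j$, and $OPT$ as the largest number of those quality devices whose total (true) cost stays within $\mathbb{B}_j$. Since the costs are nonnegative and have already been sorted as $b_1^j \le b_2^j \le \cdots \le b_{\tilde{k}_j}^j$ in line~1 of Algorithm~\ref{algo:5}, an optimal feasible set may be taken to consist of the $OPT$ cheapest devices, so that $\sum_{i=1}^{OPT} b_i^j \le \mathbb{B}_j$. With these conventions the target inequality $OPT \le 2\, OM$ is exactly $OPT \le 2K$. First I would pin down the termination of the \emph{while} loop in lines~3--6: while it keeps adding devices it scans them in sorted order and the counter $k$ equals the index of the device currently being tested (this is the reading confirmed by Example~\ref{sec:ex}, where $b_1^j$ is tested against $\mathbb{B}_j/1$, then $b_2^j$ against $\mathbb{B}_j/2$, then $b_3^j$ against $\mathbb{B}_j/3$). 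Hence, if the loop stops, it stops at device $K+1$ with $b_{K+1}^j > \frac{\mathbb{B}_j}{K+1}$. If instead the loop exhausts all $\tilde{k}_j$ devices, then $K = \tilde{k}_j$ and trivially $OPT \le \tilde{k}_j = K \le 2K$, so that case needs nothing further.

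Next I would argue by contradiction. Suppose $OPT \ge 2K+1$. Among the $OPT$ cheapest quality devices, at least $OPT - K \ge K+1$ of them have index at least $K+1$ in the sorted order, and each such device carries a bid at least $b_{K+1}^j$. Therefore
\begin{equation*}
\mathbb{B}_j \;\ge\; \sum_{i=1}^{OPT} b_i^j \;\ge\; \sum_{i=K+1}^{OPT} b_i^j \;\ge\; (OPT-K)\, b_{K+1}^j \;\ge\; (K+1)\, b_{K+1}^j \;>\; (K+1)\cdot \frac{\mathbb{B}_j}{K+1} \;=\; \mathbb{B}_j,
\end{equation*}
which is impossible. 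Hence $OPT \le 2K = 2\,OM$. The per-task bound then extends to the whole instance by summing over $\mathbb{T}_j \in \mathbb{T}$ if a global version is desired, exactly as in Lemma~\ref{lemma:2}.

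I expect the only genuine subtlety to be the bookkeeping rather than the mathematics: one must make precise that the loop counter $k$ coincides with the sorted index up to the first violation (so that the exclusion test on device $K+1$ really certifies $b_{K+1}^j > \mathbb{B}_j/(K+1)$), and one must be explicit about what ``allocation'' counts so that the cheapest-first set is indeed optimal for the benchmark $OPT$. Once those conventions are fixed, the displayed chain of inequalities is the entire argument, and it also makes transparent why the factor is $2$ and not better — a packing of $K+1$ devices each costing just above $\mathbb{B}_j/(K+1)$ is the tight case.
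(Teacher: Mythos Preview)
Your proof is correct and follows essentially the same contradiction strategy as the paper: fix a task, assume the mechanism selects fewer than half as many devices as the optimum, and derive a clash between the stopping condition of the \emph{while} loop and the budget feasibility of the optimal set over the sorted bids. The only cosmetic difference is the parametrization---the paper sets $k=|OPT|$ and assumes $|OM|<k/2$, then compares $b_{k/2}^{j}$ to $\mathbb{B}_j/(k/2)$, whereas you set $K=|OM|$ and work with $b_{K+1}^{j}$; your version is in fact a bit more carefully written (you handle the exhaustion case explicitly and make the index bookkeeping precise), but the underlying argument is the same.
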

\begin{proof}
Fix a task requester $\mathbb{R}_i$ and task $\mathbb{T}_i$. Let us suppose for the sake of contradiction that the OPT consists of \emph{k} IoT devices $i.e.$ $|OPT|=k$ and OM consists of less than $\frac{k}{2}$ IoT devices $i.e.$ $|OM|< \frac{k}{2}$. It implies that, $b_{\frac{k}{2}}^i > \frac{\mathbb{B}_i}{k/2}$. Note however, that this is impossible since we assume that $b_{\frac{k}{2}}^i \leq \ldots \leq b_{k}^i$, and $\sum_{j=\frac{k}{2}}^{k} b_j^i \leq \mathbb{B}_i$ which implies that $b_{\frac{k}{2}}^i \leq \frac{\mathbb{B}_i}{k/2}$. Hence a contradiction.  
\end{proof}

\begin{lemma}\label{lemma:4}
Let $\mathbb{U}$ be the event given as $\mathbb{U} = \{\mathbb{E}_i$ is considered for task $\mathbb{T}_j$\} and $X_j^i$ is an indicator random variable defined as $X_j^i$ = I$\{\mathbb{U}\}$. Then, the expectation is just the probability of the corresponding event $i.e.$ $E[X_j^i]$ = Pr\{$\mathbb{U}$\} \cite{Coreman_2009}.
\end{lemma}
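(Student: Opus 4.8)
The statement of Lemma~\ref{lemma:4} is a textbook identity: for any indicator random variable $X$ attached to an event $A$, one has $E[X] = \Pr\{A\}$. So the plan is essentially to unwind the definitions rather than to do any real work.

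First I would recall the definition of the indicator random variable: $X_j^i = I\{\mathbb{U}\}$ means $X_j^i = 1$ when the event $\mathbb{U}$ (namely, $\mathbb{E}_i$ is considered for task $\mathbb{T}_j$) occurs, and $X_j^i = 0$ otherwise. Since $X_j^i$ takes only the two values $0$ and $1$, I would then write its expectation directly from the definition of expectation of a discrete random variable:
\begin{equation*}
E[X_j^i] = 1 \cdot \Pr\{X_j^i = 1\} + 0 \cdot \Pr\{X_j^i = 0\} = \Pr\{X_j^i = 1\}.
\end{equation*}
Finally, because $X_j^i = 1$ is by construction exactly the event $\mathbb{U}$, I would conclude $E[X_j^i] = \Pr\{\mathbb{U}\}$, which is the claim.

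There is no real obstacle here — the only thing to be careful about is to state clearly that $X_j^i$ is supported on $\{0,1\}$ so that the two-term expansion of the expectation is exhaustive, and to cite the standard reference (as the paper already does with \cite{Coreman_2009}). I would keep the proof to two or three lines and not belabor it.
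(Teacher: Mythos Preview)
Your proposal is correct and matches the paper's proof almost line for line: both recall that $X_j^i$ takes value $1$ on $\mathbb{U}$ and $0$ on its complement, then expand the expectation as $1\cdot\Pr\{\mathbb{U}\}+0\cdot\Pr\{\bar{\mathbb{U}}\}=\Pr\{\mathbb{U}\}$. The only cosmetic difference is that the paper writes the probabilities in terms of $\mathbb{U}$ and $\bar{\mathbb{U}}$ directly rather than $\{X_j^i=1\}$ and $\{X_j^i=0\}$, but these are the same events.
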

\begin{proof}
By the definition of indicator random variable, we can write $X_j^i$ is 1 when $\mathbb{U}$ occurs and 0 when $\mathbb{U}$ does not occurs. So, as $X_j^i$ = I\{$\mathbb{U}$\}. Taking expectation both side, we get
\begin{equation*}
E[X_j^i] = E[I\{\mathbb{U}\}]
\end{equation*}
\begin{equation*}
\hspace*{35mm} = 1 \cdot Pr\{\mathbb{U}\} + 0 \cdot Pr\{\mathbb{\bar{U}}\} 
\end{equation*}
\begin{equation*}
E[X_j^i] = Pr\{\mathbb{U}\}
\end{equation*}
where, $\mathbb{\bar{U}}$ denotes $S - \mathbb{U}$ such that $S$ is the sample space. 
\end{proof}

\begin{lemma}\label{lemma:5}
The expected number of times any arbitrary $\mathbb{E}_i$ is considered (or winning) is given as $p \cdot k_i$; where $k_i$ is the number of tasks for which the $i^{th}$ IoT device has shown interest and p is the probability with which $\mathbb{E}_i$ is considered for a task. In other words, $E[X^i] = p \cdot k_i$; where $X^i$ is the random variable measuring the number of times $\mathbb{E}_i$ is considered out of $k_i$. 
\end{lemma}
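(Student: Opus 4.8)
The plan is to express the total count $X^i$ as a sum of the per-task indicator variables supplied by Lemma \ref{lemma:4}, and then invoke linearity of expectation. Concretely, let $\mathbb{T}_{j_1}, \mathbb{T}_{j_2}, \ldots, \mathbb{T}_{j_{k_i}}$ enumerate the $k_i$ tasks for which $\mathbb{E}_i$ has shown interest, and for each such task $\mathbb{T}_{j_\ell}$ let $X_{j_\ell}^i = I\{\mathbb{U}_\ell\}$ be the indicator of the event $\mathbb{U}_\ell = \{\mathbb{E}_i \text{ is considered for task } \mathbb{T}_{j_\ell}\}$, exactly in the sense of Lemma \ref{lemma:4}. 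Since $X^i$ by definition counts on how many of its $k_i$ tasks $\mathbb{E}_i$ ends up being considered (winning), we have the decomposition $X^i = \sum_{\ell=1}^{k_i} X_{j_\ell}^i$.

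Next I would apply linearity of expectation, which holds whether or not the events $\mathbb{U}_\ell$ are independent across time slots: $E[X^i] = E\!\left[\sum_{\ell=1}^{k_i} X_{j_\ell}^i\right] = \sum_{\ell=1}^{k_i} E[X_{j_\ell}^i]$. By Lemma \ref{lemma:4} each summand equals the probability of the corresponding event, $E[X_{j_\ell}^i] = \Pr\{\mathbb{U}_\ell\}$. Invoking the modelling assumption that $\mathbb{E}_i$ is considered for each of its tasks with a common probability $p$ — which is precisely the quantity $p$ named in the statement — the sum collapses to $\sum_{\ell=1}^{k_i} p = p \cdot k_i$, so $E[X^i] = p \cdot k_i$, as claimed.

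The only genuine subtlety, and the step I would flag as the main obstacle, is the justification that $\Pr\{\mathbb{U}_\ell\}$ is the same value $p$ for every $\ell$. This is not forced by the earlier algorithms; it is a homogeneity/symmetry assumption on how the peer-grading stage (Algorithm \ref{algo:2}) and the allocation stage (Algorithm \ref{algo:5}) treat $\mathbb{E}_i$ across the different time slots in which its tasks are placed. I would therefore state this assumption explicitly, note that it is natural given the uniformly-random selection of graders ($\texttt{Pick\_random}$) in Algorithm \ref{algo:2}, and then the computation above goes through verbatim; the decomposition into indicators and the use of linearity of expectation need no independence and no further hypotheses.
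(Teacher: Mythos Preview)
Your proposal is correct and follows essentially the same approach as the paper: decompose $X^i$ as a sum of per-task indicator variables, invoke Lemma~\ref{lemma:4} to identify each expectation with the probability $p$, and apply linearity of expectation to obtain $p\cdot k_i$. Your added remark flagging the homogeneity assumption on $p$ is a worthwhile caveat that the paper simply takes for granted, but the core argument is identical.
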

\begin{proof}
Fix an IoT device $\mathbb{E}_i$, we now wish to compute the expected number of times the $\mathbb{E}_i$ is considered. We capture the total number of times $\mathbb{E}_i$ is considered out of $k_i$ by $X^i$ random variable. So, the expected number of times $\mathbb{E}_i$ is considered is given as $E[X^i]$. Our sample space for $\mathbb{E}_i$ IoT device for any task $\mathbb{T}_j$ is $S$= $\{\mathbb{E}_i$ is considered for task $\mathbb{T}_j$, $\mathbb{E}_i$ not considered for task $\mathbb{T}_j\}$. So, we have Pr\{$\mathbb{E}_i$ is considered for task $\mathbb{T}_j$\}= $p$ and Pr\{$\mathbb{E}_i$ is not considered for task $\mathbb{T}_j$\} = $1-p$.\\
\indent We define the indicator random variable $X_j^i$ as $X_j^i$ = I\{$\mathbb{E}_i$ is considered for task $\mathbb{T}_j$\}; where

 \begin{equation}                   
 X_j^i =
  \begin{cases}
  1, & \textit{if $\mathbb{E}_i$ is considered for task $\mathbb{T}_j$} \\
   0,         & \textit{Otherwise}
  \end{cases}
  \end{equation}
  The expected number of times $\mathbb{E}_i$ is considered for task $\mathbb{T}_j$ is simply the expected value of our indicator random variable $X_j^i$:
  \begin{equation*}
E[X_j^i] = E[I\{\mathbb{E}_i~is~considered~for~task~\mathbb{T}_j\}]
\end{equation*}   
As always with the indicator random variable, the expectation is just the probability of the corresponding event (using lemma \ref{lemma:4}):   
\begin{equation*}
E[X_j^i]= 1 \cdot Pr\{X_j^i=1\} + 0 \cdot Pr\{X_j^i=0\}
\end{equation*}
\begin{equation*}
\hspace*{-15mm}= 1 \cdot p + 0 \cdot (1-p)
\end{equation*}
\begin{equation*}
\hspace*{-35mm}= 1 \cdot p
\end{equation*}
\begin{equation*}
 \hspace*{-50mm} E[X_j^i]= p
\end{equation*}
Now, let us consider the random variable that we are interested in and is given by $X^i = \sum\limits_{j=1}^{k_i} X_j^i$. We can compute $E[X^i]$ by taking expectation both side, we get:	
\begin{equation*}
E[X^i] = E\bigg[\sum_{j=1}^{k_i} X_j^i\bigg] 
\end{equation*} 
By linearity of expectation, we get 
\begin{equation*}
E[X^i] = \sum_{j=1}^{k_i} E[X_j^i] 
\end{equation*} 
From lemma \ref{lemma:4} it can be seen that, the expected value of any random variable is equal to the probability of the corresponding event. So, 
\begin{equation*}
E[X^i] = \sum_{j=1}^{k_i} Pr\{\mathbb{E}_i~is~considered~for~task~\mathbb{T}_j\} 
\end{equation*}
\begin{equation*}
\hspace*{-45mm} = \sum_{j=1}^{k_i} p 
\end{equation*} 
\begin{equation*}
\hspace*{-45mm} = p \cdot k_i.
\end{equation*}
Hence, the claim survived. It is to be noted that if $p = \frac{1}{2}$, then the value of $E[X^i]$ boils down to $\frac{k_i}{2}$. It means that, any arbitrary $\mathbb{E}_i$ in expectation will be considered for half of number of tasks on which it has shown interest.      
\end{proof}

\begin{lemma}\label{lemma:7}
For any arbitrary IoT device $\mathbb{E}_i$ the expected number of longest contiguous rejection out of $k_i$ tasks after which the IoT device is considered is given as $\Theta(\log_p k_i)$. More formally, we can say $E[Y] = \Theta(\log_p k_i)$; where $Y$ is a random variable that captures the longest continuous rejection of any IoT device.   
\end{lemma}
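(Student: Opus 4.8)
The plan is to read $Y$ as the length of the longest run of consecutive rejections (losses) of $\mathbb{E}_i$ over its $k_i$ tasks, modeling the tasks as independent trials in which $\mathbb{E}_i$ is considered with probability $p$ and rejected with probability $q := 1-p$ (as justified by Lemma~\ref{lemma:5}). The clause ``after which the IoT device is considered'' only excludes a run that forms a suffix of the sequence, which affects neither the upper bound nor the order of the lower bound, so it suffices to estimate the longest rejection run; the growth rate of such a streak is a classical fact, and I would reproduce the standard two-sided argument (cf.\ the streak analysis in \cite{Coreman_2009}). For the upper bound, fix an integer $\ell$ and, for $1 \le t \le k_i - \ell + 1$, let $A_t$ be the event that tasks $t, t+1, \dots, t+\ell-1$ are all rejections, so that $\Pr[A_t] = q^{\ell}$. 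A union bound gives $\Pr[Y \ge \ell] \le k_i\,q^{\ell}$, which is at most $1/k_i$ once $\ell \ge \ell_0 := \lceil 2\log_{1/q} k_i \rceil$, and decays geometrically for larger $\ell$. Hence
\begin{equation*}
E[Y] \;=\; \sum_{\ell \ge 1}\Pr[Y \ge \ell] \;\le\; \ell_0 \;+\; \sum_{\ell > \ell_0} k_i\,q^{\ell} \;=\; \ell_0 + O(1) \;=\; O(\log_{1/q} k_i).
\end{equation*}

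For the lower bound I would cut the $k_i$ trials into $m := \lfloor k_i/\ell \rfloor$ consecutive, disjoint blocks of length $\ell$. Because the blocks are disjoint, the events ``block $j$ is entirely made of rejections'' ($j = 1, \dots, m$) are mutually independent, each of probability $q^{\ell}$, so
\begin{equation*}
\Pr[Y < \ell] \;\le\; \Pr[\text{no block is all-rejections}] \;=\; (1 - q^{\ell})^{m} \;\le\; \exp(-m\,q^{\ell}).
\end{equation*}
Taking $\ell = \ell_1 := \lfloor \tfrac12 \log_{1/q} k_i \rfloor$ yields $q^{\ell_1} \ge k_i^{-1/2}$ and $m\,q^{\ell_1} = \Omega\!\bigl(\sqrt{k_i}/\log k_i\bigr) \to \infty$, so $\Pr[Y \ge \ell_1] \ge \tfrac12$ for all sufficiently large $k_i$, and therefore $E[Y] \ge \tfrac12\,\ell_1 = \Omega(\log_{1/q} k_i)$. (For the version carrying the trailing ``is considered'' constraint, one applies this same block argument to, say, the first $\lfloor k_i/2 \rfloor$ trials and uses that task $k_i$ is a consideration with probability $p$, a constant; this only worsens the constants.)

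Combining the two estimates gives $E[Y] = \Theta(\log_{1/q} k_i)$, and since $q = 1-p$ is a fixed constant the base of the logarithm is absorbed into the $\Theta(\cdot)$, which is exactly the asserted $\Theta(\log_p k_i)$ — the two bases in fact coincide at the symmetric value $p = \tfrac12$ emphasised after Lemma~\ref{lemma:5}. I expect the main obstacle to be the lower bound: rejection runs sitting in overlapping windows are strongly positively correlated, so a naive second-moment or union-type estimate is too weak; the decisive idea is the partition into disjoint length-$\ell$ blocks, which manufactures genuine independence and turns the ``some long run exists'' probability into a clean product. Everything else — the tail identity $E[Y] = \sum_{\ell \ge 1}\Pr[Y \ge \ell]$, the union bound, and the elementary inequality $1-x \le e^{-x}$ — is routine.
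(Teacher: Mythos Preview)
Your argument is correct and shares its starting point with the paper: both invoke the CLRS ``longest streak'' framework and introduce, for each starting index $k$ and length $\ell$, the indicator of the event ``$\ell$ consecutive rejections begin at position $k$'', with probability (rejection probability)$^{\ell}$. The execution, however, diverges. The paper sums these indicators by linearity to obtain the expected \emph{count} of length-$\ell$ runs, substitutes $\ell = c\log_p k_i$, and then directly asserts the $\Theta(\log_p k_i)$ conclusion; it does not separately bound $E[Y]$ above and below for the longest-run variable $Y$ itself. You instead carry out the full two-sided estimate: the tail identity $E[Y]=\sum_{\ell\ge 1}\Pr[Y\ge\ell]$ together with a union bound for the upper bound, and the partition into disjoint length-$\ell$ blocks to manufacture independence for the lower bound. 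That is precisely the extra work needed to turn the first-moment heuristic into a rigorous $\Theta$-statement, so your route buys an actual proof of both directions at the cost of a slightly longer argument. You also use $q=1-p$ for the rejection probability, in line with Lemma~\ref{lemma:5}, whereas the paper writes the run probability as $p^{\ell}$; your remark that the logarithm base is immaterial inside $\Theta(\cdot)$ reconciles this with the stated $\Theta(\log_p k_i)$.
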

\begin{proof}
Fix an IoT device $\mathbb{E}_i$. In similar line the proof is illustrated in \cite{Coreman_2009}. Our proof is divided into two cases. From Lemma \ref{lemma:5} it can be seen that the probability that $\mathbb{E}_i$ will be considered for any task $\mathbb{T}_j$ is $p$. Let $X^{i}_{kl} = I\{A_{kl}^{i}\}$ be the indicator random variable associated with an event that the IoT device $\mathbb{E}_i$ is rejected for at least \emph{l} tasks starting form $k^{th}$ task. It is to be noted that, the participation in one time slot by the IoT device is independent of the participation in other time slots. So, for any given event $X^{i}_{kl}$, the probability that for all \emph{l} tasks the IoT device is rejected is given as      
\begin{equation}\label{equation:17}
Pr\{A_{kl}^{i}\} = p\cdot p~\cdot\cdot\cdot  l~times = p^l
\end{equation}
As in our case, \emph{k} varies from 1 to $k_i - l +1$ (i.e. $1 \leq k \leq k_i - l +1$), so the total number of such rejections could be formulated as:
\begin{equation*}
Y = \sum\limits_{k=1}^{k_i-l+1} X^{i}_{kl}
\end{equation*}  
Taking expectation both side, we get
 \begin{equation*}
E[Y] = E\bigg[\sum\limits_{k=1}^{k_i-l+1} X^{i}_{kl}\bigg] 
\end{equation*}
By linearity of expectation, we get
 \begin{equation*}
\hspace*{7mm}= \sum\limits_{k=1}^{k_i-l+1} E[X^{i}_{kl}]
\end{equation*}
From the definition of expectation in Lemma \ref{lemma:4}, we have 
\begin{equation*}
E[Y]= \sum\limits_{k=1}^{k_i-l+1} Pr\{A_{kl}^{i}\}
\end{equation*}  
Using equation \ref{equation:17}, we get
\begin{equation*}
 = \sum\limits_{k=1}^{k_i-l+1} p^l 
\end{equation*}
\begin{equation*}
 E[Y] = (k_i-l+1) \cdot p^l
\end{equation*}
Now, for $l = c \log_p k_i$ and for some positive constant $c$, we obtain  
\begin{equation*}
 E[Y] = (k_i-c \log_p k_i+1)\cdot p^{c \log_p k_i} 
\end{equation*}
\begin{equation*}
\hspace*{3mm} = (k_i-c \log_p k_i+1) \cdot k_i^c
\end{equation*}
\begin{equation*}
\hspace*{6mm} = k_i^{c+1} - ck_i^c \log_p k_i + k_i^c
\end{equation*}
\begin{equation*}
\hspace*{-22mm}= \Theta(k_i^c)
\end{equation*}
From here we can conclude that, for some constant $c \geq 1$ the longest continuous rejection boils down to $\Theta(\log_p k_i)$. Hence, the claim survived.
\end{proof}

\begin{lemma}\label{lemma:6}
In our system, the probability that any arbitrary IoT device $\mathbb{E}_i$ is considered (or wins) for at least one time out of $k_i$ is greater than or equal to $1-\frac{1}{e^{p\cdot k_i}}$; where $k_i$ is the number of tasks for which the $i^{th}$ IoT device has shown interest. In other words, $Pr[X^i \geq 1] \geq \bigg(1-\frac{1}{e^{p\cdot k_i}}\bigg)$; where $X^i$ is the random variable measuring the number of times $\mathbb{E}_i$ IoT device is considered out of $k_i$. 
\end{lemma}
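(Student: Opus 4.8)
The plan is to write the event ``$\mathbb{E}_i$ is considered at least once'' as the complement of the event that $\mathbb{E}_i$ is rejected on every one of the $k_i$ tasks on which it showed interest, and then to bound the probability of that complementary event using independence across time slots together with the elementary inequality $1-x \le e^{-x}$.

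First I would recall from Lemma \ref{lemma:5} that for each task $\mathbb{T}_j$ on which $\mathbb{E}_i$ competes, the indicator $X_j^i$ satisfies $\Pr\{X_j^i = 1\} = p$, and hence $\Pr\{X_j^i = 0\} = 1-p$. Since $X^i = \sum_{j=1}^{k_i} X_j^i$ counts the number of wins of $\mathbb{E}_i$, the event $\{X^i \ge 1\}$ is exactly the complement of $\{X^i = 0\} = \bigcap_{j=1}^{k_i}\{X_j^i = 0\}$, so that $\Pr[X^i \ge 1] = 1 - \Pr\!\big[\bigcap_{j=1}^{k_i}\{X_j^i = 0\}\big]$. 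Next I would invoke the same observation that was used in Lemma \ref{lemma:7}: because any IoT device that shows interest in several tasks has those tasks placed in distinct time slots by the Time Slot Allocation Heuristic of Section \ref{sub:tsah}, and participation in one time slot is independent of participation in the others, the events $\{X_j^i = 0\}$ for $j = 1,\ldots,k_i$ are mutually independent. Therefore $\Pr\!\big[\bigcap_{j=1}^{k_i}\{X_j^i = 0\}\big] = (1-p)^{k_i}$, which gives $\Pr[X^i \ge 1] = 1 - (1-p)^{k_i}$.

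Finally I would apply the standard bound $1-p \le e^{-p}$, valid for every real $p$, so that $(1-p)^{k_i} \le e^{-p\cdot k_i} = \frac{1}{e^{p\cdot k_i}}$, and conclude $\Pr[X^i \ge 1] \ge 1 - \frac{1}{e^{p\cdot k_i}}$, as claimed. The only delicate point — and the step I would be most careful about — is the justification of the independence of the per-task rejection events; this rests entirely on the structural guarantee that an agent's interested tasks occupy different time slots, which is precisely what the heuristic of Section \ref{sub:tsah} enforces, so no assumption beyond the model is required.
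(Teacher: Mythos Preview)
Your proposal is correct and follows essentially the same route as the paper: compute $\Pr[X^i=0]=(1-p)^{k_i}$ via independence across time slots, then bound it using $1-p\le e^{-p}$ (the paper phrases this as $1+x\le e^x$) and take complements. Your write-up is in fact a bit more explicit than the paper's about the complement structure and the source of independence.
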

\begin{proof}
Fix an IoT device $\mathbb{E}_i$. As $\mathbb{E}_i$ has shown interest on $k_i$ tasks that are present in different time slots. The probability that $\mathbb{E}_i$ will be considered for task $\mathbb{T}_j$ is $p$ (Pr\{$\mathbb{E}_i$ is not considered for task $\mathbb{T}_j$\} = $1-p$). Also, it can be seen that, the consideration of $\mathbb{E}_i$ in any time slot is independent of other time slots. So, the probability that $\mathbb{E}_i$ will not be considered at all for any of the $k_i$ tasks is given as:
\begin{equation*}
Pr[X^i < 1] = (1-p) \cdot (1-p) \ldots k_i~times
\end{equation*} 
\begin{equation*}
\hspace*{-12mm}= (1-p)^{k_i}
\end{equation*}
Following the inequality $1+x \leq e^x$, we get
 \begin{equation*}
Pr[X^i < 1] \leq e^{-p\cdot k_i} = \frac{1}{e^{p\cdot k_i}}
\end{equation*}
Now, the probability that any $\mathbb{E}_i$ will be considered at least once is given as
 \begin{equation*}
Pr[X^i \geq 1] \geq \bigg(1-\frac{1}{e^{p\cdot k_i}}\bigg)
\end{equation*}
Hence, the claim survives. Also, for $p=\ln 2$, we can see that 
\begin{equation*}
Pr[X^i \geq 1] \geq \bigg(1-\frac{1}{e^{\ln 2\cdot k_i}}\bigg)
\end{equation*}
\begin{equation*}
\hspace*{15mm}= \bigg(1-\frac{1}{2k_i}\bigg)
\end{equation*}
It can be concluded that, the term $\frac{1}{2k_i}$ represents that any arbitrary $\mathbb{E}_i$ will not be considered at all is very small, and can say that it is very unlikely to occur. So, the term $(1-\frac{1}{2k_i})$ will be quite large and hence can say that any IoT device could be considered for at least once with larger probability.
\end{proof}



\section{Experimental Findings}\label{sec:ef}
In this section, we measure the efficacy of our proposed mechanism called TUBE-TAP via simulation. It is to be noted that, the TUBE-TAP is compared with the carefully crafted benchmark mechanism that is \emph{non-truthful} in nature. The manipulative behaviour of the IoT devices in case of benchmark mechanism can be seen evidently in the simulation results. It is to be noted that, our benchmark mechanism differs in terms of \emph{allocation} and \emph{payment} policy from the TUBE-TAP.\\
\indent In the benchmark mechanism, for each task, first the IoT devices are sorted in increasing order of their bid value. Afterwards, the IoT devices are picked up sequentially one at a time from the ordering and check is made that: whether \emph{the sum of the valuation of the IoT device next to it in the ordering and some small constant value (say $\epsilon$) is less than or equal to the remaining budget associated with the task} or not. If the stopping condition is satisfied, then the IoT device will be declared as winner, otherwise not. After the declaration of winner set, the payment of any IoT device in the winning set is the sum of the bid value of the IoT device following it in the sorted ordering and the $\epsilon$ value. More formally, the payment of any $i^{th}$ IoT device for the task $\mathbb{T}_j$ is given as $\mathbb{\boldsymbol{P}}_i^j = b_{i+1}^j + \epsilon$; where $b_{i+1}^j$ is the bid value of the IoT device following \emph{i} in the sorted ordering. It is to be noted that the $\epsilon$ value is same throughout the system, it is taken as $\epsilon = 10$ in our case. The unit of bid value and the budget is taken as \$. The experiments are carried out using Python.     
\subsection{Simulation Set-up}
For our simulation purpose, we have varied the number of task requesters and the number of IoT devices so as to analyse the results in a more better sense. Table \ref{tab:1} shows the configuration of different values of number of task requesters and number of IoT devices that has been utilized for the simulation purpose. For each configuration, the experiment runs for 50 rounds ad the required values are plotted by taking average over these 50 rounds. Other than this, in order to strengthen our claim, we have simulated the mechanisms for two different probability distributions independently; namely, \emph{uniform distribution} (UD) and \emph{normal distribution} (ND). Throughout the experiment, the bid value range (in case of UD) for IoT devices and the budget range for the tasks are kept fixed. It is to be noted that, budget is uniformly distributed within the given range for both ND and UD. Considering the case of ND, for generating the bid values of the IoT devices the mean is taken as 110 and standard deviation is taken as 15. 
\begin{table}[H]
\centering
 \caption{Data set utilized for simulation purpose}
\label{data}
\renewcommand{\arraystretch}{1.3}
\scalebox{0.9}{
\begin{tabular}{ l | l  l l l l l}
\hline
\hline
\emph{Task requesters}&50&100&150&200&250&300\\
\hline
\emph{Task executers}&500&1000&1500&2000&2500&3000\\
\hline
\emph{Bid value range} (for UD)&[80,~150]&[80,~150]&[80,~150]&[80,~150]&[80,~150]&[80,~150]\\
\hline
\emph{Budget distribution}&[400,~600]&[400,~600]&[400,~600]&[400,~600]&[400,~600]&[400,~600]\\
\hline
\end{tabular}
\label{tab:1}}
\end{table}
\noindent In order to measure the efficacy of TUBE-TAP, we have taken two performance metrics: 1) Budget utilization, and 2) Utility of the IoT devices.    
\subsection{Result Analysis}
In this section, we are simulating TUBE-TAP which we are claiming is \emph{budget feasible} and \emph{truthful} in our setting against the benchmark mechanism (which will be referred as BM in the figures of simulation results).\\
\indent Considering the first parameter $i.e.$ \emph{Budget utilization}, we can see in Figure \ref{fig:sim1a}, and Figure \ref{fig:sim2a} that the budget utilization in case of TUBE-TAP is a bit more as compared to the budget utilization in case of BM for both ND and UD case. This is due to the fact that, in case of TUBE-TAP each winner is paid a value between the bid value of last winner and the bid value of the first loser present in the sorted ordering. 
\begin{figure}[H]
\begin{subfigure}[b]{0.50\textwidth}
                \centering
                \includegraphics[scale=0.40]{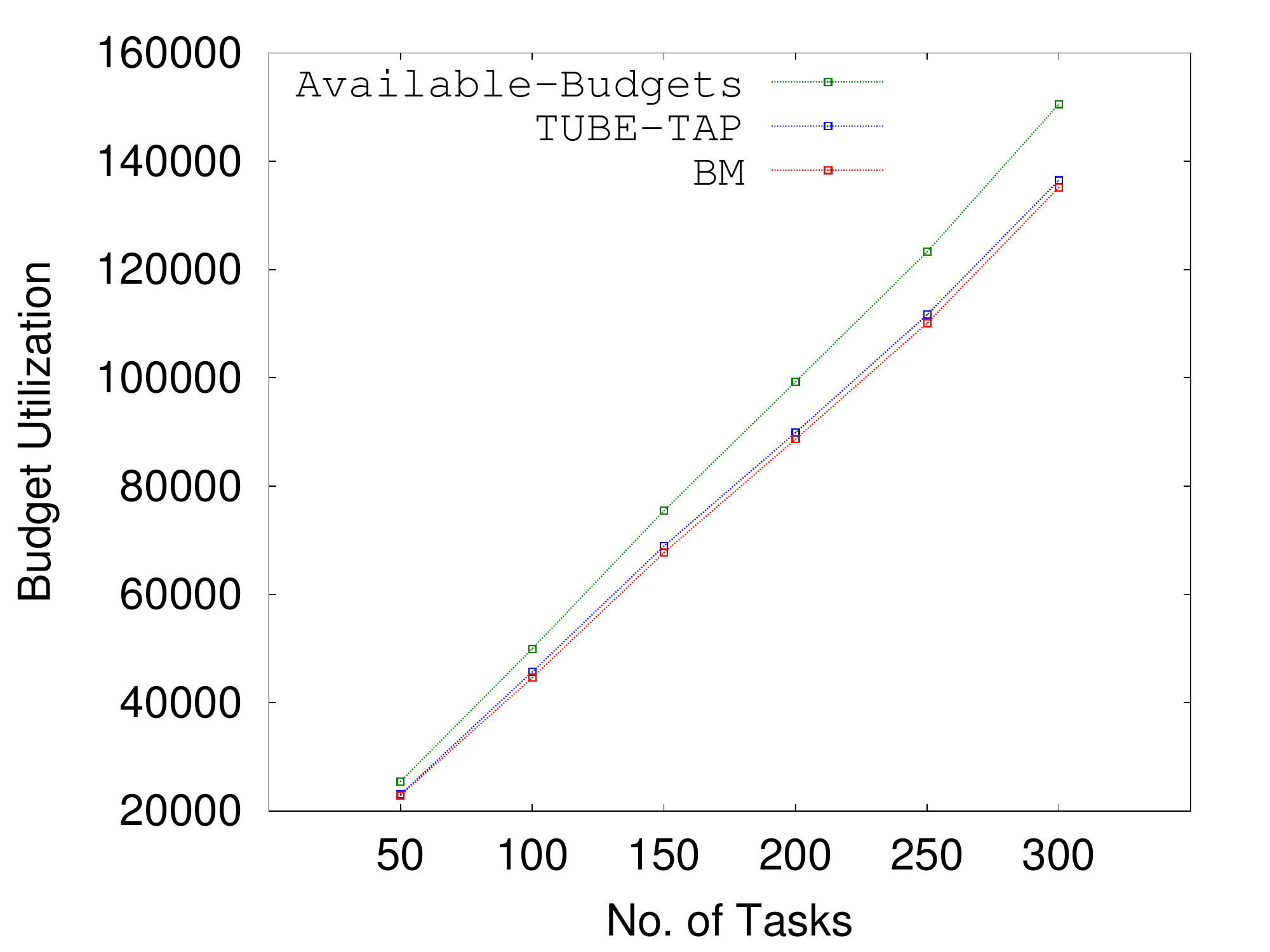}
                \subcaption{Budget utilization (ND)}
                \label{fig:sim1a}
        \end{subfigure}%
        \begin{subfigure}[b]{0.50\textwidth}
                \centering
                \includegraphics[scale=0.40]{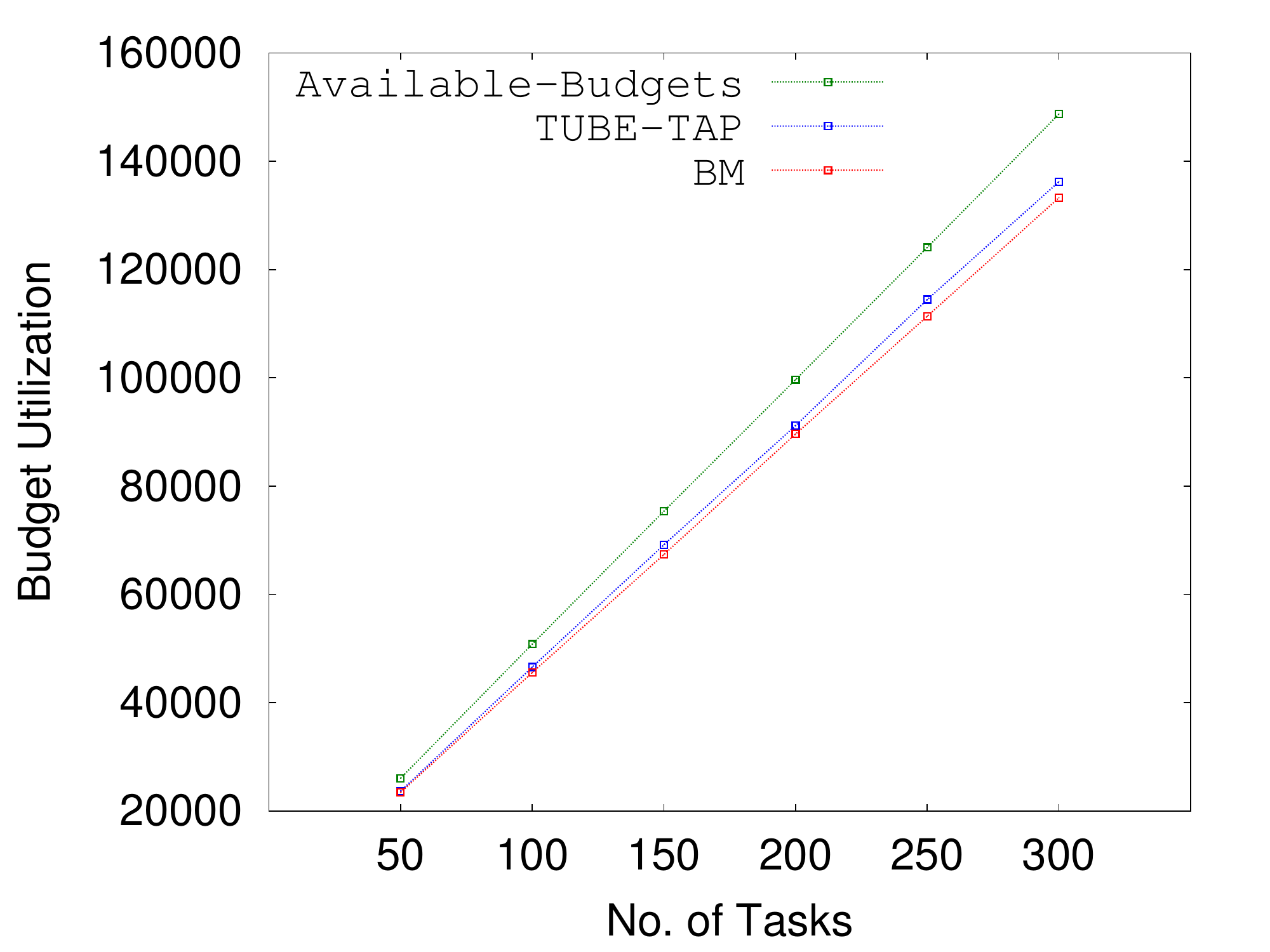}
                \subcaption{Budget utilization (UD)}
                \label{fig:sim2a}
        \end{subfigure}
        \caption{Comparison of Budget utilization in ND and UD cases}
\end{figure} 
\indent However, in case of BM each winner is paid a bit more than the bid value of preceding IoT device in the ordering. As the bid values of the IoT devices are sorted in increasing order, so the payment made to each winning IoT device in case of TUBE-TAP is more as compared to BM. Due to this reason, we can see that the budget utilization is higher in case of TUBE-TAP as compared to BM shown in Figure \ref{fig:sim1a} and Figure \ref{fig:sim2a}. Another important observation one can make from Figure \ref{fig:sim1a}, and Figure \ref{fig:sim2a} is that, both the mechanisms $i.e.$ TUBE-TAP and BM are \emph{budget feasible} that supports the claim made for TUBE-TAP in Lemma \ref{lemma:2}.\\
\indent Next comes the discussion on the behaviour of the mechanisms based on our second parameter. The sole purpose of considering this parameter is to judge the two mechanisms on the ground of \emph{truthfulness}. It is already pointed out that BM is vulnerable to manipulation $i.e.$ the IoT devices can gain by misreporting their privately held bid values in case of BM. During the simulation, in order to show the so called manipulative behaviour of BM we have varied the bid values of the subset of the IoT devices. More formally, we have considered that $15\%$ of the available IoT devices (in our case this is referred as \emph{small} variation) are increasing their bid value by 35\% of their true valuation. Similar is the case with \emph{medium} variation ($30\%$) and the \emph{large} variation ($40\%$). In the figures of simulation results, BM with \emph{small} variation, BM with \emph{medium} variation, and BM with \emph{large} variation is shown as BM-S-var, BM-M-var, and BM-L-var respectively.\\
\indent In Figure \ref{fig:sim1b} and Figure \ref{fig:sim2b} the comparison between the two mechanisms $i.e.$ TUBE-TAP and BM is done based on the \emph{utility of the IoT devices} parameter for ND and UD  cases respectively. 

\begin{figure}[H]
\begin{subfigure}[b]{0.50\textwidth}
                \centering
                \includegraphics[scale=0.60]{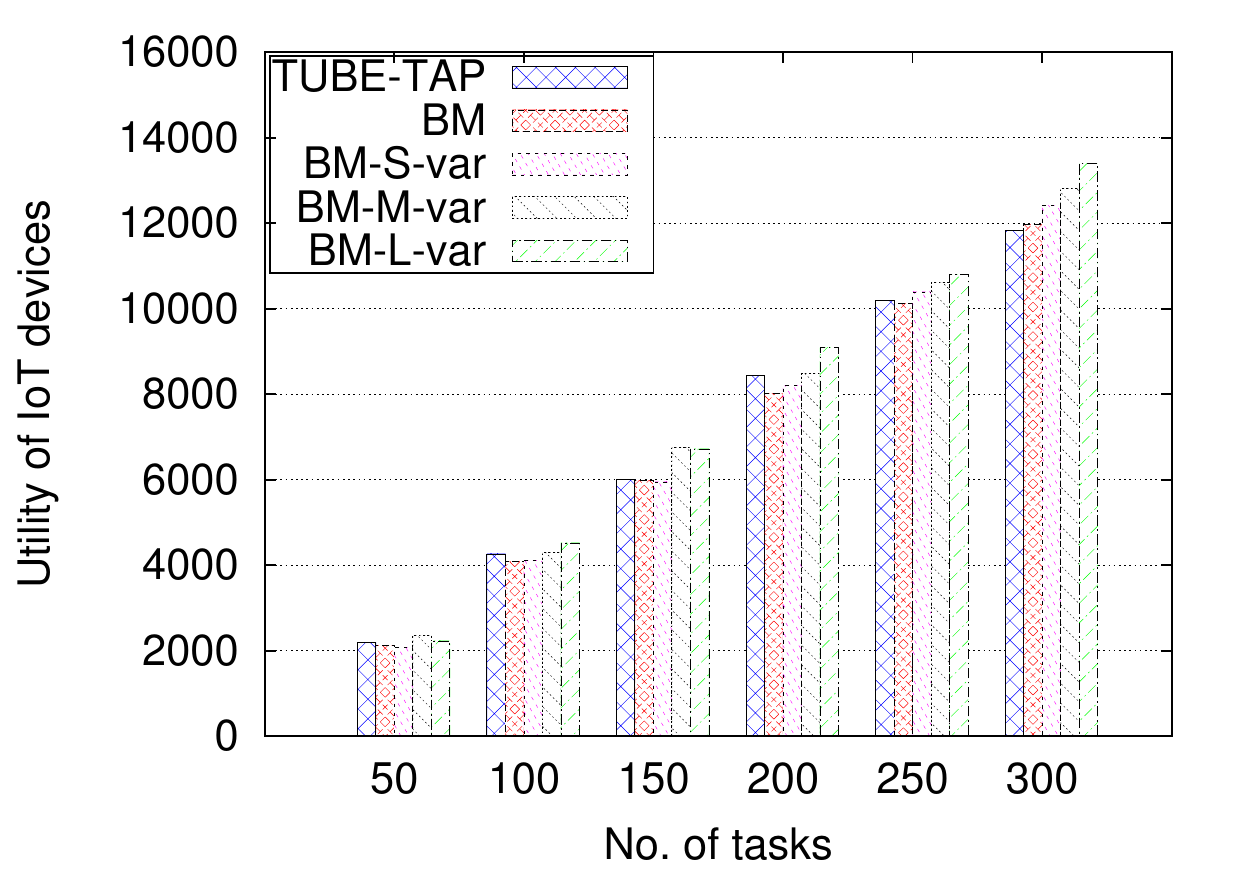}
                \subcaption{Utility of IoT devices (ND)}
                \label{fig:sim1b}
        \end{subfigure}%
        \begin{subfigure}[b]{0.50\textwidth}
                \centering
                \includegraphics[scale=0.60]{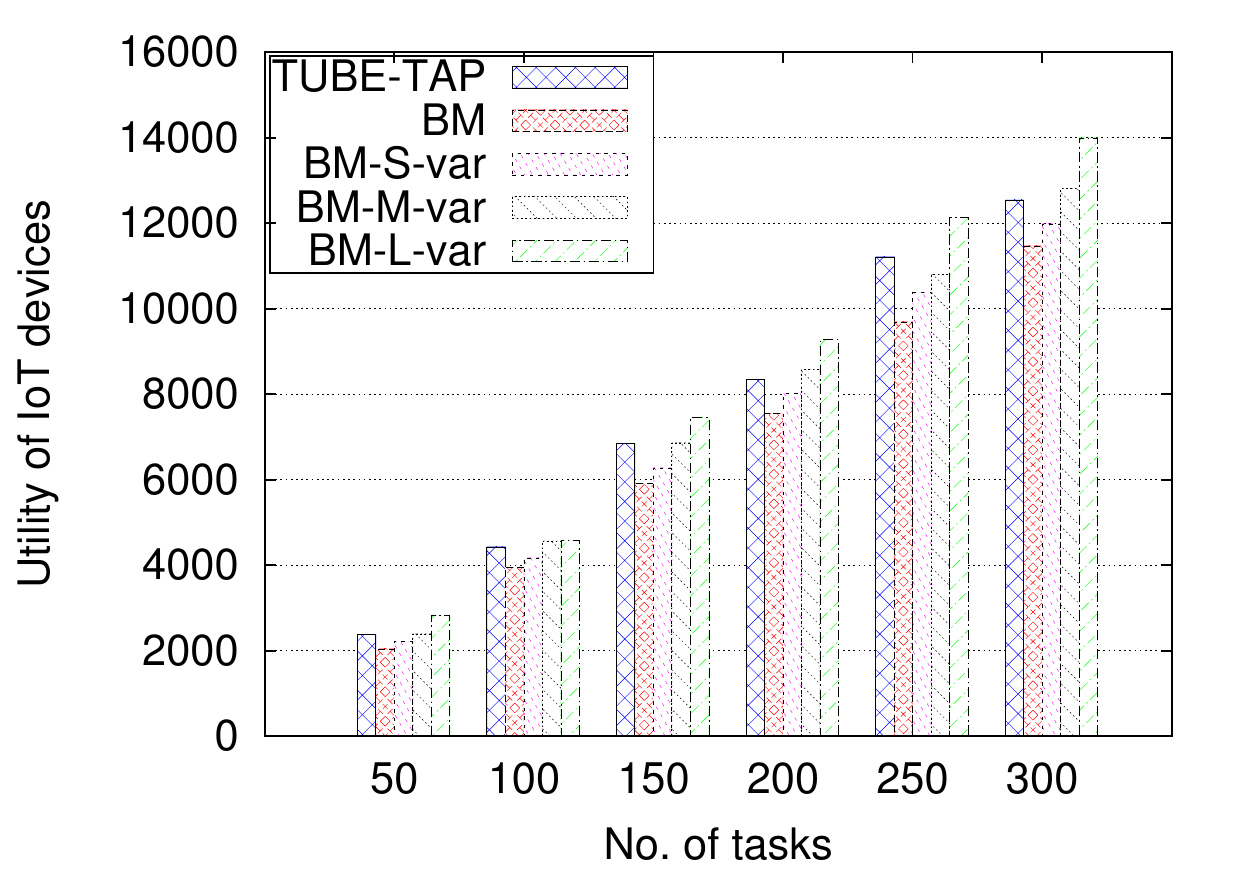}
                \subcaption{Utility of IoT devices (UD)}
                \label{fig:sim2b}
        \end{subfigure}
        \caption{Comparison of Utility of IoT devices in ND and UD cases}
\end{figure}

It can be seen that, most of the time the utility of IoT devices for TUBE-TAP is more as compared to the utility of IoT devices for BM in both ND and UD case. This very nature of TUBE-TAP is due to the reason that IoT devices are paid higher in case of TUBE-TAP as compared to BM that can be concluded from the results shown Figure \ref{fig:sim1a}, and Figure \ref{fig:sim2a}. Also, talking about the manipulative nature of the BM, it can be easily seen in Figure \ref{fig:sim1b}, and Figure \ref{fig:sim2b} that overall utility of the IoT devices gets increased by misreporting the bid values. The utility of IoT devices is higher in case of large variation than in case of medium variation than in case of small variation. Also, in some manipulative cases (mostly in \emph{large} variation) it could be seen that the utility achieved by the IoT devices in case of BM bypass even the utility gained by the IoT devices in case of TUBE-TAP. So, one can conclude that larger the number of IoT devices increasing their bid value by some amount (say 35\%) higher will be  the utility for the IoT devices. As the IoT devices are gaining by misreporting, so BM is \emph{non-truthful}. 

\section{Conclusion and Future Works}\label{se:cons}
In this paper, we have investigated a heterogeneous task assignment problem in IoT based crowdsourcing through the lens of mechanism design. We have designed a \emph{truthful} mechanism for the problem such that for each task the total payment made to the subset of IoT devices are within budget while achieving a threshold quality. 
In our future works, we can investigate the more realistic version of the problem by injecting the constraint that the tasks endowed with the task requesters are divisible in nature along with the several other additional constraints. We can think of designing a \emph{truthful budget feasible mechanism} for the more realistic version of the problem. 
\section*{Acknowledgements}
We would like to thanks the research students and faculty members of the Department of CSE, NIT Durgapur for their valuable suggestions during the course of this work. We would also like to thank Government of India, Ministry of Human Resource Development (MHRD) for the funds.

\bibliographystyle{plain}
\bibliography{phd}



\end{document}